\begin{document}
\title{Energy Beamforming with One-Bit Feedback}

\author{Jie Xu and Rui Zhang
\thanks{Part of this paper has been presented in the IEEE International Conference on Acoustics, Speech, and Signal Processing (ICASSP), Florence, Italy, May 4-9, 2014 \cite{XuJie2014}.}
\thanks{J. Xu is with the Department of
Electrical and Computer Engineering, National University of
Singapore (e-mail: elexjie@nus.edu.sg).}
\thanks{R. Zhang is with
the Department of Electrical and Computer Engineering, National
University of Singapore (e-mail: elezhang@nus.edu.sg). He is also
with the Institute for Infocomm Research, A*STAR, Singapore.}}
\maketitle

\begin{abstract}
Wireless energy transfer (WET) has attracted significant attention recently for delivering energy to electrical devices without the need of wires or power cables. In particular, the radio frequency (RF) signal enabled far-field WET is appealing to power energy-constrained wireless networks in a broadcast manner. To overcome the significant path loss over wireless channels, multi-antenna or multiple-input multiple-output (MIMO) techniques have been proposed to enhance both the transmission efficiency and range for RF-based WET. However, in order to reap the large energy beamforming gain in MIMO WET, acquiring the channel state information (CSI) at the energy transmitter (ET) is an essential task. This task is particularly challenging for WET systems, since existing channel training and feedback methods used for communication receivers may not be implementable at the energy receiver (ER) due to its hardware limitation. To tackle this problem, we consider in this paper a multiuser MIMO WET system, and propose a new channel learning method that requires only one feedback bit from each ER to the ET per feedback interval. Specifically, each feedback bit indicates the increase or decrease of the harvested energy by each ER in the present as compared to the previous intervals, which can be measured without changing the existing structure of the ER. Based on such feedback information, the ET adjusts transmit beamforming in subsequent training intervals and at the same time obtains improved estimates of the MIMO channels to different ERs by applying an optimization technique called analytic center cutting plane method (ACCPM). For the proposed ACCPM based channel learning algorithm, we analyze its worst-case convergence, from which it is revealed that the algorithm is able to estimate multiuser MIMO channels at the same time without reducing the analytic convergence speed. Furthermore, through extensive simulations, we show that the proposed algorithm outperforms existing one-bit feedback based channel learning schemes in terms of both convergence speed and energy transfer efficiency, especially when the number of ERs becomes large.
\end{abstract}
\begin{keywords}
Wireless energy transfer (WET), multiple-input multiple-output (MIMO), energy beamforming, channel learning, one-bit feedback, analytic center cutting plane method (ACCPM).
\end{keywords}

\IEEEpeerreviewmaketitle
\newtheorem{definition}{\underline{Definition}}[section]
\newtheorem{fact}{Fact}
\newtheorem{assumption}{Assumption}
\newtheorem{theorem}{\underline{Theorem}}[section]
\newtheorem{lemma}{\underline{Lemma}}[section]
\newtheorem{corollary}{\underline{Corollary}}[section]
\newtheorem{proposition}{\underline{Proposition}}[section]
\newtheorem{example}{\underline{Example}}[section]
\newtheorem{remark}{\underline{Remark}}[section]
\newtheorem{algorithm}{\underline{Algorithm}}[section]
\newcommand{\mv}[1]{\mbox{\boldmath{$ #1 $}}}

\section{Introduction}\label{sec:introduction}
\PARstart{W}{ireless} energy transfer (WET) has attracted significant interests recently for delivering energy to electrical devices over the air. Generally, WET can be implemented by inductive coupling via magnetic field induction, magnetic resonant coupling based on the principle of resonant coupling, or electromagnetic (EM) radiation. The different types of WET techniques in practice have their respective advantages and disadvantages (see e.g. \cite{Xie2013} and the references therein). For example, inductive coupling and magnetic resonant coupling both have high energy transfer efficiency for short-range (e.g., several centimeters) and mid-range (say, a couple of meters) applications, respectively; however, it is difficult to apply them to charge freely located devices simultaneously. In contrast, EM radiation based far-field WET, particularly over the radio frequency (RF) bands, is applicable for much longer range (up to tens of meters) applications and also capable of charging multiple devices even when they are moving by exploiting the broadcast nature of RF signal propagation; whereas its energy transfer efficiency may fall rapidly over distance.

RF signal enabled WET is anticipated to have abundant applications in providing cost-effective and perpetual energy supplies to energy-constrained wireless networks such as sensor networks in future. In fact, applying RF-based WET in various types of wireless communication networks has been extensively studied in the literature recently. In general, there are two main lines of research that have been pursued, namely simultaneous wireless information and power transfer (SWIPT) (see e.g. \cite{ZhangHo2013,Liu2013,ZhouZhangHo2013,XuLiuZhang2013,Nasir2013,ParkClerckx2013,HuangLarsson2013}) and wireless powered communications (WPC) (see e.g. \cite{Lee2013,HuangLau,JuZhang2013,LiuZhangChua2014}), where the information transmission in the network is in the same or opposite direction of the WET, respectively. 

For both SWIPT and WPC, how to optimize the energy transfer efficiency from the energy transmitter (ET) to one or more energy receivers (ERs) by combating the severe signal power loss over distance is a challenging problem. To efficiently solve this problem, multi-antenna or multiple-input multiple-output (MIMO) techniques, which have been successfully applied in wireless communication systems to improve the information transmission rate and reliability over wireless channels, were also proposed for WET \cite{ZhangHo2013}. Specifically, deploying multiple antennas at the ET enables focusing the transmitted energy to destined ERs via beamforming, while equipping multiple antennas at each ER increases the effective aperture area, both leading to  improved end-to-end energy transfer efficiency. For the point-to-point MIMO WET system, it has been shown in \cite{ZhangHo2013} that energy beamforming is optimal to maximize the energy transfer efficiency by transmitting with only one single energy beam at the ET, which is in sharp contrast to  the celebrated spatial multiplexing technique used in the point-to-point MIMO communication system which applies multiple beams to maximize the information transmission rate \cite{Telatar1999}.

\begin{figure}
\centering
 \epsfxsize=1\linewidth
    \includegraphics[width=8.8cm]{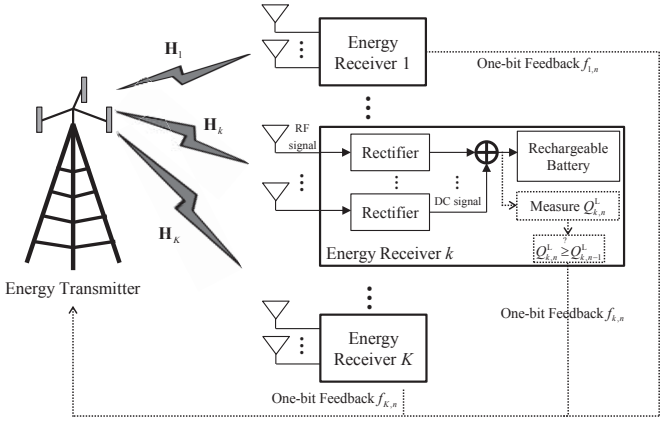}
\caption{A multiuser MIMO broadcast system for wireless energy transfer (WET).} \label{fig:system}
\end{figure}

In practical systems, the benefit of energy beamforming in MIMO WET crucially relies on the availability of the channel state information (CSI) at the ET. However, acquiring such CSI is particularly challenging in WET systems, since existing methods for channel learning in wireless communication  (see e.g. \cite{Love2008} and the references therein) may be no longer applicable. For example, one well-known solution to acquire the CSI at the transmitter in conventional wireless communication is by estimating the reverse link channel based on the training signals sent by the receiver. However, this method only applies to systems operating in time-division duplex (TDD) and critically depends on the accuracy of the assumption made on the reciprocity between the forward and reverse link channels.  Furthermore, applying this method to WET systems requires a more careful design of energy-efficient training signals at the ER, since they consume part of the ER's energy that is harvested from the ET. Alternatively, another commonly adopted solution to obtain CSI at the transmitter in wireless communication is by sending training signals from the transmitter to the receiver, through which the receiver can estimate the channel and then send the estimated channel back to the transmitter via a feedback channel. This method applies to both TDD and frequency-division duplex (FDD) based systems; however, it requires complex baseband signal processing at the receiver for channel estimation and feedback, which may not be implementable at the ER in WET system due to its practical hardware limitation.{\footnote{Fig. \ref{fig:system} shows a commonly used ER design for WET \cite{ZhouZhangHo2013}, in which each receive antenna (also known as rectenna) first converts the received RF signal to a direct current (DC) signal via a rectifier, and then the DC signals from all receive antennas are combined to charge a battery. Evidently, it is difficult in this ER design to incorporate baseband signal processing for channel estimation.}} To overcome the above drawbacks of existing methods, it is desirable to investigate new channel learning and feedback schemes for MIMO WET systems by taking into account the hardware limitation at each ER, which motivates this work.

In this paper, we consider a multiuser MIMO system for WET as shown in Fig. \ref{fig:system}, where one ET with $M_T > 1$ transmit antennas broadcasts wireless energy to a group of $K$ ERs each with $M_R\ge 1$ receive antennas via transmit energy beamforming over a given frequency band. We assume that the ERs can send their feedback information to the ET perfectly over orthogonal feedback channels (by e.g. piggybacking the feedback information with their uplink data in a wireless powered sensor network \cite{JuZhang2013}). Under this system setup, we propose a two-phase transmission protocol for channel learning and energy transmission, respectively. In the channel learning phase, the ET aims to learn the MIMO channels to different ERs by adjusting the training signals according to the individual feedback information from each ER. Based on the estimated channels, in the energy transmission phase, the ET then designs optimal transmit energy beamforming to maximize the weighted sum-power transferred to all ERs. In particular, we propose a new channel learning algorithm that requires only one feedback bit from each ER per feedback interval. Specifically, each feedback bit indicates the increase or decrease of the harvested energy at each ER in the present versus the previous intervals, which can be practically measured at each ER without changing the existing energy harvesting circuits as shown in Fig. \ref{fig:system} (by e.g. connecting an ``energy meter'' at the sum output of the DC signals from different receive antennas). Based on such feedback information, the ET adjusts its transmitted training signals in subsequent feedback intervals during the channel learning phase and at the same time obtains improved estimates of the MIMO channels to different ERs.

It is worth noting that there have been several alternative schemes reported in the literature for one-bit feedback based channel learning, e.g., {\it cyclic Jacobi technique (CJT)} \cite{Noam2013}, {\it gradient sign} \cite{BanisterZeidler2003}, and {\it distributed beamforming} \cite{Mudumbai2010}, which have been proposed and studied in different application scenarios. Specifically, the CJT algorithm was proposed for the secondary transmitter (ST) to learn its interference channel to the primary receiver (PR) in a MIMO cognitive radio system \cite{Noam2013}, in which the ST adjusts its transmitted signals over consecutive time slots based on the one-bit information indicating the increase or decrease of its resulted interference power at the PR, which is extracted from the feedback signals of the PR. The gradient sign algorithm was proposed to estimate the dominant eigenmode of a point-to-point MIMO channel \cite{BanisterZeidler2003}, in which the transmitter obtains a one-bit feedback from the receiver per time slot, which indicates the increase or decrease of the signal-to-noise-ratio (SNR) at the receiver over two consecutive slots. The distributed beamforming algorithm was proposed to learn the channel phases in a system consisting of multiple distributed single-antenna transmitters simultaneously sending a common message to a single-antenna receiver \cite{Mudumbai2010}, where each transmitter updates its own signal phase in a distributed manner by using the one-bit feedback from the receiver indicating whether the current SNR is larger or smaller than its recorded highest SNR so far. Notice that the above three algorithms can all be applied to one-bit feedback based channel learning in the MIMO WET system of our interest; however, these methods have the common limitation that they can only be used to learn the eigenvectors or the dominant eigenmode of a single-user MIMO channel matrix at each time, instead of learning multiple users' MIMO channels exactly at the same time. As a result, they may not achieve the optimal energy transfer efficiency in the multiuser MIMO WET system based on one-bit feedback.

In this paper, we propose a new approach to design the one-bit feedback based MIMO channel learning for WET by applying the celebrated analytic center cutting plane method (ACCPM) in convex optimization \cite{Boyd:ConvexII}. To the authors' best knowledge, this paper is the first attempt to apply the ACCPM approach for the design of channel learning with one-bit feedback. For our proposed ACCPM based channel learning algorithm, we first provide an analysis for its worst-case convergence. It is shown that the ACCPM based channel learning can obtain the estimates of all $K$ MIMO channels each with arbitrary number of receive antennas, $M_R$, in at most $\mathcal O\left(\left\lceil\frac{K}{M_T^2-1}\right\rceil\frac{M_T^3}{\varepsilon^2} \right)$ number of feedback intervals with $\varepsilon > 0$ denoting a desired accuracy, and $\lceil\cdot\rceil$ representing the ceiling function of real numbers. From this result, it is further inferred that when $K \le M_T^2-1$, the proposed algorithm has the same analytic convergence performance regardless of the number of ERs, $K$, which shows its benefit of simultaneously learning multiuser MIMO channels. Finally, we compare the performance of our proposed channel learning algorithm against the aforementioned three benchmark algorithms in terms of both convergence speed and energy transfer efficiency. It is shown through extensive simulations that our proposed algorithm achieves faster convergence for channel learning as well as higher energy transfer efficiency than the other three algorithms; while the performance gain of our proposed algorithm becomes more significant as the number of ERs in the WET system increases.

The remainder of this paper is organized as follows. Section \ref{SysMod} introduces the system model and the two-phase transmission protocol. Section \ref{sec:one-bit} presents the proposed channel learning algorithm with one-bit feedback for the point-to-point or single-user MIMO WET system as well as its convergence analysis. Section \ref{sec:one-bit:multi} extends the channel learning algorithm and analysis to the general multiuser WET system. Section \ref{sec:numerical} provides simulation results to evaluate the performance of our proposed algorithm as compared to other benchmark algorithms. Finally, Section \ref{sec:conclusion} concludes the paper.

{\it Notation:} Boldface letters refer to vectors (lower  case) or matrices (upper case). For a square matrix $\mv{S}$, $\det(\mv{S})$ and ${\mathtt{tr}}(\mv{S})$ denote its determinant and trace, respectively, while $\mv{S}\succeq \mv{0}$ and $\mv{S}\preceq \mv{0}$ mean that $\mv{S}$ is positive semi-definite and negative semi-definite, respectively. For an arbitrary-size matrix $\mv{M}$, $\|\mv{M}\|_{\rm F}$, ${\mathtt{rank}}(\mv{M})$, $\mv{M}^H$, and $\mv{M}^T$ denote the Frobenius norm, rank, conjugate transpose and transpose of $\mv{M}$, respectively. $\mv{I}$, $\mv{0}$, and $\mv 1$ denote an identity matrix, an all-zero matrix, and an all-one column vector, respectively, with appropriate dimensions. $\mathbb{C}^{x\times y}$ and $\mathbb{R}^{x\times y}$ denotes the space of $x\times y$ complex and real matrices, respectively. ${\mathbb{E}}(\cdot)$ denotes the statistical expectation. $\|\mv{x}\|$ denotes the Euclidean norm of a complex vector $\mv{x}$, and $|z|$ denotes the magnitude of a complex number $z$. $j$ denotes the complex number $\sqrt{-1}$.

\section{System Model}\label{SysMod}

We consider a multiuser MIMO broadcast system for WET as shown in Fig. \ref{fig:system}, where one ET with $M_T>1$ transmit antennas delivers wireless energy to a group of $K \ge 1$ ERs, denoted by the set $\mathcal{K}=\{1,\ldots,K\}$. For notational convenience, each ER is assumed to be deployed with the same number of $M_R\ge 1$ receive antennas, while our results directly apply to the case when each ER is with different number of antennas. We assume a quasi-static flat fading channel model, where the channel from the ET to each ER remains constant within each transmission block of our interest and may change from one block to another. We denote each block duration as $T$, which is assumed to be sufficiently long for typical low-mobility WET applications.

We consider linear transmit energy beamforming at the multiple-antenna ET. Without loss of generality, we assume that the ET sends $d \le M_T$ energy beams, where $d$ is our design parameter to be specified later. Let the $m$th beamforming vector be denoted by $\mv{w}_m\in \mathbb{C}^{M_T \times 1}$ and its carried energy-modulated signal by $s_m$, $m\in\{1,\ldots,d\}$. Then the transmitted signal at ET is given by $\mv{x} = \sum_{m=1}^{d} \mv{w}_m s_m.$ Since $s_m$'s do not carry any information, they can be assumed to be independent sequences from an arbitrary distribution with zero mean and unit variance, i.e., $\mathbb{E}\left(|s_m|^2\right)=1, \forall m$. Furthermore, we denote the transmit covariance matrix as  $\mv{S}=\mathbb{E}(\mv{x}\mv{x}^H)=\sum_{m=1}^{d}\mv{w}_m \mv{w}_m^H \succeq \mv{0}$. Note that given any positive semi-definite matrix $\mv{S}$, the corresponding energy beams $\mv{w}_1, \ldots, \mv{w}_d$ can be obtained from the eigenvalue decomposition (EVD) of  $\mv{S}$ with $d=\mathtt{rank}(\mv{S})$. Assume that the ET has a transmit sum-power constraint $P$ over all transmit antennas; then we have $\mathbb{E}(\|\mv{x}\|^2)= \sum_{m=1}^{d} \|\mv{w}_m\|^2 = \mathtt{tr}(\mv{S}) \le P$.

With transmit energy beamforming, each ER $k$ can harvest the wireless energy carried by all $d$ energy beams from its $M_R$ receive antennas. Denote $\mv{H}'_k \in \mathbb{C}^{M_R \times M_T}$ as the MIMO channel matrix from the ET to ER $k$, and $\mv{G}'_k \triangleq {\mv{H}_k'}^H\mv{H}'_k \succeq \mv{0}$. Then by letting ${\gamma_k}$ denote the Frobenius norm of the matrix ${\mv{G}}'_k$, i.e., $\gamma_k = \|{\mv{G}}'_k\|_{\rm F}$, we obtain the normalized channel matrix from the ET to ER $k$ as ${{\mv{H}}_k} \triangleq \mv{H}'_k/\sqrt{\gamma_k}$ (or ${\mv{G}}_k \triangleq \mv{G}'_k/\gamma_k$) with $\|{\mv{G}}_k\|_{\rm F} = \|{\mv{H}}_k^H{{\mv{H}}_k}\|_{\rm F} = 1, \forall k\in\mathcal{K}$. Accordingly, the harvested energy at ER $k$ over one block of interest is expressed as \cite{ZhangHo2013}
\begin{align}
Q_k = \varsigma T\mathbb{E}\left(\left\|\mv{H}_k'\mv{x}\right\|^2\right) = \varsigma T \gamma_k \mathtt{tr}(\mv{G}_k\mv{S}), k\in\mathcal{K},\label{eqn:1}
\end{align}
where $0 < \varsigma \le 1 $ denotes the energy harvesting efficiency at each receive antenna (cf. Fig. \ref{fig:system}). Since $\varsigma$ is a constant, we normalize it as $\varsigma = 1$ in the sequel of this paper unless otherwise specified. It is assumed that each ER $k$ cannot directly estimate the MIMO channel $\mv H'_k$ (or $\mv G'_k$) given its energy harvesting receiver structure (cf. Fig. \ref{fig:system}); instead, it can measure its average harvested power over a certain period of time by simply connecting an ``energy meter'' at the combined DC signal output shown in Fig. \ref{fig:system}.

We aim to design the energy beams at the ET to maximize the weighted sum-energy transferred to $K$ ERs, i.e., $Q\triangleq\sum_{k\in\mathcal{K}} \alpha_kQ_k$ with $Q_k$ given in (\ref{eqn:1}), over each transmission block subject to a given transmit sum-power constraint, where $\alpha_k  \ge 0$ denotes the energy weight for ER $k\in\mathcal{K}$ with $\sum_{k\in\mathcal{K}}\alpha_k = 1$. In order to ensure certain fairness among different ERs for WET, it is desirable to assign higher energy weights to the ERs more far apart from the ET. Accordingly, in this paper we set the energy weight to be proportional to the reciprocal of the channel power gain to the respective ER, i.e.,
\begin{align}\label{eqn:revision1}
\alpha_k = \frac{ 1/{\gamma_k}}{\sum_{l\in\mathcal{K}} (1/{\gamma_l})}, k\in\mathcal{K}.
\end{align}
As a result, the weighted sum-energy transferred to $K$ ERs can be re-expressed as $Q = T\gamma\mathtt{tr}(\mv{G}\mv{S})$ with $\mv{G} \triangleq \sum_{k\in\mathcal{K}} \mv{G}_k$ and $\gamma \triangleq \frac{ 1}{\sum_{l\in\mathcal{K}} (1/{\gamma_l})}$. As a result, we can formulate the weighted sum-energy maximization problem as
\begin{align}
\mathop\mathtt{max}\limits_{\mv{S}}~&T\gamma\mathtt{tr}(\mv{G}\mv{S})\nonumber\\
\mathtt{s.t.}~~&\mathtt{tr}(\mv{S})\le P,~\mv{S}\succeq \mv{0}.\label{eqn:problem:MaxK1}
\end{align}
It has been shown in \cite{ZhangHo2013} that the optimal solution  to (\ref{eqn:problem:MaxK1}) is given by $\mv{S}^* = {P}\mv{v}_E\mv{v}_E^H$, which achieves the maximum value of $Q_{\rm{max}}=T\gamma P\lambda_E$, with $\lambda_E$ and $\mv{v}_E$ denoting the dominant eigenvalue and its corresponding eigenvector of $\mv{G}$, respectively. Since $\mathtt{rank}\left(\mv{S}^*\right)=1$, this solution implies that sending one energy beam (i.e., $d=1$) in the form of $\mv{w}_1=\sqrt{P}\mv{v}_E$ is optimal for our multiuser MIMO WET system of interest. This solution is thus referred to as the {\it optimal energy beamforming (OEB)} for a given $\mv G$. Here, implementing the OEB only requires the ET to have the perfect knowledge of the $K$ normalized MIMO channels, $\mv{G}_1, \ldots, \mv{G}_K$, but does not require its knowledge of the average channel gain $\gamma_k$'s.{\footnote{Note that the OEB design here can be extended to other cases with different energy fairness considered, by modifying the energy weight $\alpha_k$'s (instead of setting them as in (\ref{eqn:revision1})). In such cases, it may be necessary for the ET to have an estimate of the average channel gain $\gamma_k$'s for setting $\alpha_k$'s. Since $\gamma_k$'s change slowly over time, they can be coarsely estimated in practice by e.g. measuring the received signal strength from each ER in the reverse link, by assuming a weaker form of channel reciprocity.}}


\begin{figure}
\centering
 \epsfxsize=1\linewidth
    \includegraphics[width=8.8cm]{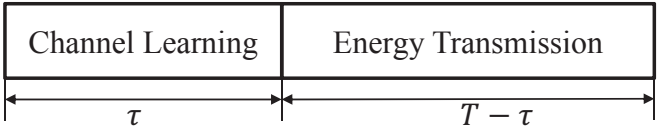}\vspace{0.3em}
\caption{The two-phase transmission protocol.} \label{fig:protocol}
\end{figure}

In order for the ET to practically estimate the MIMO channels, $\mv{G}_1, \ldots, \mv{G}_K$, we propose a transmission protocol for the multiuser MIMO WET system as shown in Fig. \ref{fig:protocol}, which consists of two consecutive phases in each transmission block for the main purposes of channel learning and energy transmission, respectively. We explain these two phases of each transmission block in more detail as follows.

The channel learning phase corresponds to the first $\tau$ amount of time in each block of duration $T$, which is further divided into $N_L$ feedback intervals each of length $T_s$, i.e., $\tau=N_LT_s$. For convenience, we assume that $N=T/T_s$ is an integer denoting the total block length in number of feedback intervals. During this phase, the ET transmits different training signals (each specified by a corresponding transmit covariance matrix) over $N_L$ feedback intervals. Let the transmit covariance at the ET in interval $n\in\{1,\ldots,N_L\}$ be denoted by $\mv{S}_n^{\rm{L}}$. Then the transferred energy to the $k$th ER over the $n$th interval is given by $Q_{k,n}^{\rm{L}}=T_s\gamma_k\mathtt{tr}(\mv{G}_k\mv{S}_n^{\rm{L}})$. In the meanwhile, ER $k$ measures its harvested energy amount $Q_{k,n}^{\rm{L}}$ and based on it feeds back one bit at the end of the $n$th interval, denoted by $f_{k,n} \in\{0,1\}$, to indicate whether the harvested energy in the $n$th interval is larger (i.e., $f_{k,n}=0$) or smaller (i.e., $f_{k,n}=1$) than that in the $(n-1)$th interval, $n=1,\ldots,N_L$. For the convenience of our analysis later, we set $f_{k,n}\leftarrow 2f_{k,n}-1$ such that $f_{k,n}\in\{-1,1\}$. More specifically, if $Q_{k,n}^{\rm{L}} \ge Q_{k,n-1}^{\rm{L}}$, then $f_{k,n}=-1$; while if $Q_{k,n}^{\rm{L}} < Q_{k,n-1}^{\rm{L}}$, then $f_{k,n}=1$. We also denote $Q_{k,0}^{\rm{L}} \triangleq 0$ and equivalently $\mv{S}_{0}^{\rm{L}} = \mv{0}$ for convenience. Notice that the feedback interval $T_s$ should be designed considering the practical  feedback link rate from each ER to the ET as well as the sensitivity of the energy meter at each ER. For the purpose of exposition, we assume in this paper that $Q_{k,n}^{\rm{L}}$'s are all perfectly measured at corresponding ERs, and thus $f_{k,n}$'s are all accurately determined at the ERs and then sent back to the ET without any error.{\footnote{In practice, there exist measurement errors for estimating $Q_{k,n}^{\rm{L}}$'s at the ERs due to the rectifier noise and feedback errors in the received $f_{k,n}$'s at the ET due to the imperfect reverse links from the ERs, both of which result in inaccurate $f_{k,n}$'s at the ET. It is thus interesting to investigate their effects on the performance of our proposed channel learning algorithm with one-bit feedback, which, however, are beyond the scope of this paper.}} Furthermore, we assume that the consumed energy for sending the one-bit feedback $f_{k,n}$'s is negligible at each ER as compared to its average harvested energy. At the end of the channel learning phase, by using the collected feedback bits $\{f_{k,n}\}$ from all ERs, the ET can obtain an estimate of the normalized MIMO channel $\mv{G}_k$ for each ER $k$, which is denoted by $\tilde{\mv{G}}_k$, $k\in\mathcal{K}$. The details of training signal design and channel estimation at the ET based on the one-bit feedback information from one or more ERs will be given later in Sections \ref{sec:one-bit} and \ref{sec:one-bit:multi}.

The subsequent energy transmission phase in each block corresponds to the remaining $T-\tau$ amount of time. Given the estimated $\tilde{\mv{G}}_k$'s from the channel learning phase, we can obtain the estimate of $\mv{G}$ as $\tilde{\mv{G}} = \sum_{k\in\mathcal{K}} \tilde{\mv{G}}_k$, and accordingly have the estimate of its dominant eigenvector $\mv{v}_E$ as $\tilde{\mv{v}}_E$. Then based on the principle of OEB, the ET sets the (rank-one) transmit covariance in the energy transmission phase as $\mv{S}^{\rm{E}} = P\tilde{\mv{v}}_E\tilde{\mv{v}}_E^H$. Accordingly, the weighted sum-energy transferred to all ERs during this phase is expressed as $Q^{\rm{E}} = (T-\tau)P\gamma\tilde{\mv{v}}_E^H{\mv{G}}\tilde{\mv{v}}_E$.

Combining the above two phases, the total weighted sum-energy transferred to all the $K$ ERs over one particular block is given by
\begin{align}\label{eqn:Qpro}
Q_{\rm{total}}=\sum_{n=1}^{N_L}T_s\gamma\mathtt{tr}(\mv{G}\mv{S}_n^{\rm{L}}) + (T-\tau)P\gamma\tilde{\mv{v}}_E^H{\mv{G}}\tilde{\mv{v}}_E.
\end{align}
In (\ref{eqn:Qpro}), we observe that if the estimated MIMO channel $\tilde{\mv{G}}_k$'s are all accurate with a given finite $N_L$ (or $\tau$), then it follows that $\tilde{\mv{v}}_E^H{\mv{G}}\tilde{\mv{v}}_E\approx \lambda_E$. In this case, we can have $Q_{\rm{total}} \to Q_{\rm{max}}$ by increasing the block duration, i.e., $N\to\infty$ or $T\to\infty$. However, given finite $N$ or $T$ (which needs to be chosen to be smaller than the channel coherence time in practice), there is in general a trade-off in setting the time allocations, i.e., $N_L$ versus $N-N_L$,  between the channel learning and energy transmission phases in order to maximize $Q_{\rm{total}}$ in (\ref{eqn:Qpro}), as will be demonstrated latter by our numerical results in Section \ref{sec:numerical}.

In the above proposed two-phase transmission protocol for multiuser MIMO WET, the key challenge lies in the design of channel learning algorithms at the ET to estimate the normalized MIMO channel $\mv{G}_k$'s based only on the one-bit feedbacks from different ERs in the first channel learning phase, which is thus our focus of study in the rest of this paper. In the next two sections, we first present the channel learning algorithm for the special case of one single ER to draw useful insights, and then extend the algorithm to the general case with multiple ERs.

\section{Channel Learning with One-Bit Feedback: Single-User Case}\label{sec:one-bit}

In this section, we consider the point-to-point or single-user MIMO WET system with $K=1$ ER. For notational convenience, we remove the user subscript $k$ in this case, and thus denote the harvested energy amount and the feedback bit at each interval $n$ as $Q^{\rm L}_n$ and $f_n$, respectively, $n\in\{1,\ldots,N_L\}$. Furthermore, since there is only one ER in the system, we denote its channel power gain as $\gamma = \gamma_1$, and its normalized channel matrix to be estimated as $\mv{G} = \mv{G}_1$.

We aim to propose a new channel learning algorithm for the ET to estimate the single-user MIMO channel $\mv{G}$ based on the one-bit feedbacks from the ER over training intervals in the channel learning phase. The proposed algorithm is based on the celebrated ACCPM in convex optimization \cite{Boyd:ConvexII}. In the following, we first introduce ACCPM,\footnote{We refer the readers to  \cite{Boyd:ConvexII} for more details of ACCPM.} then present the ACCPM based channel learning algorithm with one-bit feedback, and finally provide its convergence analysis.

\subsection{Introduction of ACCPM}
ACCPM is an efficient localization and cutting plane method for solving general convex or quasi-convex optimization problems \cite{Boyd:ConvexII,SunTohZhao2002}, with the goal of finding one feasible point in a convex target set $\mathcal{X}\subseteq \mathbb{R}^{m \times 1}, m\ge 1$, where $\mathcal{X}$ can be the set of optimal solutions to the optimization problem. Suppose that any point in the target set $\mathcal{X}$ is known {\it a priori} to be contained in a convex set $\mathcal{P}_0$, i.e., $\mathcal{X} \subseteq \mathcal{P}_0$. $\mathcal{P}_0$ is referred to as the initial working set. The basic idea of ACCPM is to query an {\it oracle} for localizing the target set $\mathcal{X}$ through finding a sequence of convex working sets, denoted by $\mathcal{P}_1,\cdots,\mathcal{P}_i, \cdots.$ At each iteration $i \ge 1$, we query the oracle at a point ${\mv{x}}^{(i)}\in \mathbb{R}^{m \times 1}$, where ${\mv{x}}^{(i)}$ is chosen as the analytic center of the previous working set $\mathcal{P}_{i-1}$. If ${\mv{x}}^{(i)}\in\mathcal{X}$, then the algorithm ends. Otherwise, the oracle returns a {\it cutting plane}, i.e., $\mv{a}_i\neq \mv{0}$ and $b_i$ satisfying that
\begin{align}\label{eqn:APCCM:1}
\mv{a}_i^T\mv{z} \le b_i~{\rm for}~\mv{z} \in \mathcal{X},
\end{align}
which indicates that $\mathcal{X}$ should lie in the half space of $\mathcal{H}_i = \{\mv{z}|\mv{a}_i^T\mv{z} \le b_i\}$. After the querying, the working set is then updated as $\mathcal{P}_i = \mathcal{P}_{i-1}\cap\mathcal{H}_i$. By properly choosing the cutting plane in (\ref{eqn:APCCM:1}) based on ${\mv{x}}^{(i)}$, we can have $\mathcal{P}_0\supseteq\cdots\supseteq\mathcal{P}_i\supseteq\mathcal{X}$. Therefore, the returned working set $\mathcal{P}_i$ will be reduced and eventually approach the target set $\mathcal{X}$ as $i\to\infty$.

It is worth noting that given query point ${\mv{x}}^{(i)}$, if the cutting plane $\mv{a}_i^T\mv{z} = b_i$ in (\ref{eqn:APCCM:1}) contains ${\mv{x}}^{(i)}$, then it is referred to as a {\it neutral cutting plane}; if $\mv{a}_i^T{\mv{x}}^{(i)}  > b_i$, i.e., ${\mv{x}}^{(i)}$ lies in the interior of the cut half space, then it is named a {\it deep cutting plane}; otherwise, it is called as a {\it shallow cutting plane}. For ACCPM, a deep or at least neutral cutting plane is required in each iteration.

\subsection{ACCPM Based Single-User Channel Learning}\label{channelLearning:K1}

In this subsection, we present the proposed channel learning algorithm based on ACCPM. First, we define the target set for our problem of interest. Recall that our goal is to obtain an estimate of the normalized channel matrix $\mv{G}$, which is equivalent to finding any positively scaled estimate of ${\mv{G}}$. As a result, we define the target set as $\mathcal{X}=\{\bar{\mv{G}}|\mv{0}\preceq \bar{\mv{G}}\preceq \mv{I},~\bar{\mv{G}} = \beta\mv{G},\forall \beta>0\}$, which contains all scaled matrices of $\mv{G}$ satisfying that $\mv{0}\preceq \bar{\mv{G}}\preceq \mv{I}$. Since $\mv{0}\preceq \bar{\mv{G}}\preceq \mv{I}$ is known {\it a priori}, we have the initial convex working set as $\mathcal{P}_0 = \{\bar{\mv{G}}|\mv{0}\preceq \bar{\mv{G}}\preceq \mv{I}\},$ i.e., $\mathcal{X} \subseteq \mathcal{P}_0$.

Next, we show that the one-bit feedback $f_n$'s in the $N_L$ feedback intervals play the role of oracle in ACCPM for our problem, which return a sequence of working sets $\{\mathcal{P}_n\}$ to help localize the target set $\mathcal{X}$. Consider each feedback interval as one iteration. Then, for any feedback interval $n\in\{2,\ldots,N_L\}$,{\footnote{Note that for interval $n=1$, it always holds that $Q_1^{\rm{L}} \ge Q_0^{\rm{L}} = 0$, and thus the one-bit feedback information is always $f_{1} = -1$, which does not contain any useful information for localizing the target set $\mathcal{X}$.}} by querying the one-bit feedback $f_n$, the ET can obtain the following inequality for $Q_n^{\rm{L}}$ and $Q_{n-1}^{\rm{L}}$ (recall that $Q_n^{\rm{L}}=T_s\gamma\mathtt{tr}(\mv{G}\mv{S}_n^{\rm{L}})$):
\begin{align}
f_n\mathtt{tr}\left(\mv{G} (\mv{S}_n^{\rm{L}}-\mv{S}_{n-1}^{\rm{L}})\right) \le 0,\label{eqn:energybeam:0}
\end{align}
which can be regarded as a cutting plane such that  $\mv{G}$ lies in the half space of  $\mathcal{H}_n=\{\bar{\mv{G}}|f_n\mathtt{tr}\left(\bar{\mv{G}} (\mv{S}_n^{\rm{L}}-\mv{S}_{n-1}^{\rm{L}})\right)  \le 0\}$. Accordingly, by denoting $\mathcal{P}_1=\mathcal{P}_0$, we can obtain the working set $\mathcal{P}_n$ at interval $n\ge 2$ by updating $\mathcal{P}_n = \mathcal{P}_{n-1}\cap\mathcal{H}_n$, or equivalently,
\begin{align}
\mathcal{P}_n =  &\bigg\{\bar{\mv{G}}\big|\mv{0}\preceq\bar{\mv{G}}\preceq \mv{I},~f_i\mathtt{tr}\left(\bar{\mv{G}}\left(\mv{S}_i^{\rm{L}}-\mv{S}_{i-1}^{\rm{L}}\right)\right) \le 0, \nonumber \\ &~~~~~~~~~~~~~~~~~~~~~~~~~~~~~~~~~~~~~~~~~~~~2\le i\le n\bigg\}.\label{learning:4}
\end{align}It is evident that $\mathcal{P}_0=\mathcal{P}_1\supseteq\mathcal{P}_2\supseteq\cdots\supseteq\mathcal{P}_{N_L}\supseteq\mathcal{X}$.

From (\ref{learning:4}), we can obtain the analytic center of $\mathcal{P}_{n}$, denoted as $\tilde{\mv{G}}^{(n)}$, which is explicitly given by \cite{SunTohZhao2002}{\footnote{Since the matrix to be estimated (i.e., $\mv{G}$) is complex, we use $2\log \det \left(\bar{\mv{G}}\right)$ and $2\log \det \left(\mv{I}-\bar{\mv{G}}\right)$ in (\ref{eqn:13}) to compute the analytic centers, instead of  $\log \det \left(\bar{\mv{G}}\right)$ and $\log \det \left(\mv{I}-\bar{\mv{G}}\right)$ as used in \cite{SunTohZhao2002} for the case of real matrices. Our new definition in (\ref{eqn:13}) will facilitate the convergence proof for the proposed algorithm (see Appendix \ref{appendix:1}).}}
\begin{align}
 \tilde{\mv{G}}^{(n)} =&\mathtt{arg}\mathop\mathtt{min}_{\mv{0}\preceq\bar{\mv{G}}\preceq \mv{I}}~- 2\log \det \left(\bar{\mv{G}}\right) -  2\log \det \left(\mv{I}-\bar{\mv{G}}\right) \nonumber \\
& -\sum_{i=2}^{n} \log\left(-f_i\mathtt{tr}\left(\bar{\mv{G}} \left(\mv{S}_i^{\rm{L}}-\mv{S}^{\rm{L}}_{i-1}\right)\right)\right), n\ge 0.\label{eqn:13}
\end{align}
Since the problem in (\ref{eqn:13}) can be shown to be convex \cite{Boyd}, it can be solved by standard convex optimization techniques, e.g., CVX \cite{cvx}. Notice that $\tilde{\mv{G}}^{(n)}$ is also the query point for the next feedback interval $n+1$.

Up to now, we have obtained the query point at each interval $n$, $\tilde{\mv{G}}^{(n-1)}$, and the cutting plane given by (\ref{eqn:energybeam:0}) for ACCPM. To complete our algorithm, we also need to ensure that the resulting cutting plane is at least neutral given $\tilde{\mv{G}}^{(n-1)}$. This is equivalent to constructing the transmit covariance $\mv{S}_n^{\rm{L}}$'s such that
\begin{align}\mathtt{tr}\left(\tilde{\mv{G}}^{(n-1)} \left(\mv{S}_n^{\rm{L}}-\mv{S}_{n-1}^{\rm{L}}\right)\right) = 0, n=2,\ldots,N_L.\label{eqn:neutral}\end{align} We find such $\mv{S}^{\rm{L}}_n$'s by setting $\mv{S}^{\rm{L}}_1=\frac{P}{M_T}\mv{I}$ for interval $n=1$ and
\begin{align}\label{eqn:A}
\mv{S}^{\rm{L}}_n=\mv{S}^{\rm{L}}_{n-1}+\mv{B}_n
\end{align}for the remaining intervals $n =2,\ldots,N_L$, where $\mv{B}_n\in\mathbb{C}^{M_T\times M_T}$ is a Hermitian probing matrix that is neither positive nor negative semi-definite in general. With the above choice, finding a pair of $\mv{S}^{\rm{L}}_n$ and $\mv{S}^{\rm{L}}_{n-1}$ to satisfy (\ref{eqn:neutral}) is simplified to finding the probing matrix $\mv{B}_n$ satisfying $\mathtt{tr}(\tilde{\mv{G}}^{(n-1)}{\mv{B}}_n)=0,n =2,\ldots,N_L.$ To find such ${\mv{B}}_n$ for the $n$th interval, we define a vector operation $\mathrm{cvec}(\cdot)$ that maps a complex Hermitian matrix $\mv{X} \in \mathbb{C}^{m \times m}$ to a real vector $\mathrm{cvec}(\mv{X}) \in \mathbb{R}^{m^2 \times 1}, m\ge1$, where all elements of $\mathrm{cvec}(\mv{X})$ are independent from each other, and $\mathtt{tr}({\mv{X}}{\mv{Y}}) = (\mathrm{cvec}(\mv{X}))^T\mathrm{cvec}(\mv{Y})$ for any given complex Hermitian matrix $\mv{Y}$.\footnote{The mapping between the complex Hermitian matrix $\mv{X} \in \mathbb{C}^{m \times m}$ and the real vector $\mathrm{cvec}(\mv{X})  \in \mathbb{R}^{m^2 \times 1}$, $m \ge 1$, can be realized as follows. The first $m$ elements of $\mathrm{cvec}(\mv{X}) $ consist of the diagonal elements of $\mv{X}$ (that are real), i.e., $[\mv{X}]_{aa}$'s, $\forall a\in\{1,\ldots,m\}$, where $[\mv{X}]_{ab}$ denotes the element in the $a$th row and $b$th column of $\mv{X}$; the next $\frac{m^2-m}{2}$ elements of $\mathrm{cvec}(\mv{X}) $ are composed of the (scaled) real part of the upper (or lower) off-diagonal elements of $\mv{X}$, i.e., $\frac{[\mv{X}]_{ab} + [\mv{X}]_{ba}}{\sqrt{2}}$'s, $\forall a,b\in\{1,\ldots,m\}, a < b$; and the last $\frac{m^2-m}{2}$ elements of $\mathrm{cvec}(\mv{X}) $ correspond to the (scaled) imaginary part of the lower off-diagonal elements of $\mv{X}$, i.e.,  $j\frac{[\mv{X}]_{ab} - [\mv{X}]_{ba}}{\sqrt{2}}$'s, $\forall a,b\in\{1,\ldots,m\}, a < b$.} Accordingly, we can express $\tilde{\mv{g}}^{(n-1)} = \mathrm{cvec}\left(\tilde{\mv{G}}^{(n-1)}\right)$ and ${\mv{b}}_n=\mathrm{cvec}\left({\mv{B}}_n\right)$, where $\tilde{\mv{g}}^{(n-1)T}{\mv{b}}_n=\mathtt{tr}(\tilde{\mv{G}}^{(n-1)}{\mv{B}}_n)=0.
$
Due to the one-to-one mapping of $\mathrm{cvec}\left(\cdot\right)$, finding ${\mv{B}}_n$ is equivalent to finding ${\mv{b}}_n$ that is orthogonal to $\tilde{\mv{g}}^{(n-1)}$. Define a projection matrix $\mv{F}_n = \mv{I} - \frac{\tilde{\mv{g}}^{(n-1)}\tilde{\mv{g}}^{(n-1)T}}{\|\tilde{\mv{g}}^{(n-1)}\|^2}$. Then we can express  ${\mv{F}}_n = {\mv{V}}_n{\mv{V}}^T_n$, where ${\mv{V}}_n \in \mathbb{R}^{M_T^2\times (M_T^2-1)}$ satisfies ${\mv{V}}^T_n\tilde{\mv{g}}^{(n-1)}= \mv{0}$ and ${\mv{V}}^T_n{\mv{V}}_n = \mv{I}$. Thus, $\mv{b}_n$ can be any vector in the subspace spanned by ${\mv{V}}_n$. Specifically, we set
\begin{align}\label{learning:3}
\mv{b}_n = {\mv{V}}_n\mv{p},
\end{align}
where $\mv{p}\in\mathbb{R}^{(M_T^2-1)\times 1}$ is a randomly generated vector in order to make $\mv{b}_n$ independently drawn from the subspace. With the obtained $\mv{b}_n$, we have the probing matrix $\mv{B}_n = \mathrm{cmat}(\mv{b}_n)$,{\footnote{Note that $\mv{B}_n$ in general contains both positive and negative eigenvalues. As a result, the update in (\ref{eqn:A}) may not necessarily yield an $\mv{S}_n^{\rm L}$ that satisfies both $\mathtt{tr}(\mv{S}_n^{\rm L})\leq P$ and $\mv{S}_n^{\rm L}\succeq \mv{0}$. Nevertheless, by setting $\|\mv{p}\|$ to be sufficiently smaller than $P$, we can always find a $\mv{p}$ and its resulting $\mv{S}_n^{\rm L}$ satisfying the above two conditions with only a few random trials. In this paper, we choose  $\|\mv{p}\|=P/10$.}} where $\mathrm{cmat}(\cdot)$ denotes the inverse operation of $\mathrm{cvec}(\cdot)$. Accordingly, $\mv{S}_n^{\rm{L}}$ that satisfies the neutral cutting plane in (\ref{eqn:neutral}) is obtained.

To summarize, we present the ACCPM based channel learning algorithm with one-bit feedback for the single-user case in Table I as Algorithm 1. Note that in step 3) of the algorithm, the iteration terminates after $N_L$ feedback intervals of the channel leaning phase, and in step 4), the estimate of ${\mv{G}}$ is set as the normalized matrix of the analytic center of  $\mathcal{P}_{N_L}$, given by $\tilde{\mv{G}}=\frac{\tilde{\mv{G}}^{(N_L)}}{\left\|\tilde{\mv{G}}^{(N_L)}\right\|_{\rm F}}$. Accordingly, we can use the dominant eigenvector of $\tilde{\mv{G}}$ as the corresponding OEB $\tilde{\mv{v}}_E$ for the energy transmission phase in the single-user MIMO WET system.

\begin{table}[!t]\scriptsize
\caption{ACCPM Based Channel Learning Algorithm for Single-User Case}
\label{table2} \centering
\begin{tabular}{|p{8.5cm}|}
\hline
\textbf{Algorithm 1}\\
\hline\vspace{0.01cm}
  1) {\bf Initialization:} Set $n=0$, $Q_0^{\rm{L}}=0$, and $\mv{S}_1^{\rm{L}}=\frac{P}{M_T}\mv{I}$. \\
2) {\bf Repeat:}
  \begin{itemize} \setlength{\itemsep}{0pt}
    \item[a)] $n \gets n+1$;
    \item[b)] The ET transmits with $\mv{S}_n^{\rm{L}}$;
    \item[c)] The ER feeds back $f_n=-1$ (or $1$) if $Q_n^{\rm{L}} \ge Q_{n-1}^{\rm{L}}$ (or otherwise);
    \item[d)] The ET computes the query point $\tilde{\mv{G}}^{(n)}$ given in (\ref{eqn:13});
    \item[e)] The ET computes $\mv{b}_{n+1}$ from (\ref{learning:3}), obtains $\mv{B}_{n+1}=\mathrm{cmat}(\mv{b}_{n+1})$, and updates $\mv{S}^{\rm{L}}_{n+1}=\mv{S}_{n}^{\rm{L}}+\mv{B}_{n+1}$.
  \end{itemize}
  3) {\bf Until}  $n\ge N_L$.\\
  4) The ET estimates $\tilde{\mv{G}}={\tilde{\mv{G}}^{(N_L)}}\big/{\left\|\tilde{\mv{G}}^{(N_L)}\right\|_{\rm F}}$.\\
 \hline
\end{tabular}\vspace{-1em}
\end{table}

\subsection{Convergence Analysis: Single-User Case}\label{channelLearning:convergence}

For the ACCPM based channel learning algorithm given in Table \ref{table2}, we proceed to analyze its convergence performance by assuming that $N_L$ (and hence $N$) can be set to be arbitrarily large. We first have the following proposition.
\begin{proposition}\label{proposition:1}
Suppose that the target set $\mathcal{X}$ admits certain estimation errors specified by the desired accuracy $\varepsilon > 0$, i.e., $\mathcal{X}=\{\bar{\mv{G}}^\varepsilon |\mv{0}\preceq \bar{\mv{G}}^\varepsilon \preceq \mv{I},~\|\bar{\mv{G}}^\varepsilon-\beta\mv{G}\|_{\mathrm{F}}\le \varepsilon,\forall \beta>0\}$. Then the updated $\tilde{\mv{G}}^{(n)}$'s in the ACCPM based single-user channel learning algorithm will converge to a point in the target set $\mathcal{X}$ with $\|\tilde{\mv{G}}^{(n)}\|_{\rm F} \ge 1/4$ once the iteration index $n$ ($n\ge 1$) satisfies the following inequality:
\begin{align}
\varepsilon^2 > \frac{M_T+4M_T^2(2M_T+1)\log\left(1+\frac{n-1}{16M_T^2(2M_T+1)}\right)}{(4n+16M_T-4)\exp(\frac{2(n-1)c}{n+4M_T-1})},\label{eqn:convergence}
\end{align}
where $c>0$ is a constant, and the right-hand side in (\ref{eqn:convergence}) is monotonically decreasing with $n \ge 1$.
\end{proposition}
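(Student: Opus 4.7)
The plan is to run the standard ACCPM potential analysis and combine an upper and a lower bound on the optimal value of the barrier problem in (\ref{eqn:13}). Write $\Phi_n(\bar{\mv{G}})$ for the objective of (\ref{eqn:13}) and set $\Phi_n^\star = \Phi_n(\tilde{\mv{G}}^{(n)})$. The doubling in front of each $\log\det$ (see the footnote after (\ref{eqn:13})) is chosen precisely so that $-2\log\det(\bar{\mv{G}})-2\log\det(\mv{I}-\bar{\mv{G}})$ is a self-concordant barrier of complexity $4M_T$ on the Hermitian-matrix box, viewed through the $\mathrm{cvec}(\cdot)$ map as a convex body in $\mathbb{R}^{M_T^2}$. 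The strategy is to show that, if $\tilde{\mv{G}}^{(n)} \notin \mathcal{X}$, then $\Phi_n^\star$ would have to exceed the upper bound that ACCPM's construction guarantees, forcing convergence whenever (\ref{eqn:convergence}) holds.

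First, for the upper bound, I would exploit that each cut (\ref{eqn:energybeam:0}) is neutral at the current analytic center (see (\ref{eqn:neutral})) and apply the potential-increment lemma for ACCPM with neutral cuts, adapting the analysis of Goffin--Vial / Ye. Working out the dependence on the complexity parameter $4M_T$ and on the ambient real dimension $M_T^2$, one obtains an estimate of the form
\[
\Phi_n^\star - \Phi_0^\star \;\le\; \mathrm{const}\cdot M_T^2(2M_T+1)\log\!\Big(1+\tfrac{n-1}{16M_T^2(2M_T+1)}\Big) + O(M_T),
\]
which supplies the numerator of the right-hand side of (\ref{eqn:convergence}). For the lower bound, assume by contradiction that no scaled matrix $\beta\mv{G}$ lies within Frobenius distance $\varepsilon$ of $\tilde{\mv{G}}^{(n)}$. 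Since $\mathcal{X}\subseteq \mathcal{P}_n$, a scaled truth $\beta\mv{G}$ still sits inside $\mathcal{P}_n$ together with a Frobenius ball of radius proportional to $\varepsilon$. Evaluating $\Phi_n$ at a carefully chosen point on the segment joining $\tilde{\mv{G}}^{(n)}$ to $\beta\mv{G}$ and tracking the contribution of each of the $n-1$ log-cut terms via the Hessian metric induced by the $\log\det$ barrier produces a lower bound of the form
\[
\Phi_n^\star \;\ge\; \Phi_0^\star + \log\!\Big((4n+16M_T-4)\exp\!\big(\tfrac{2(n-1)c}{n+4M_T-1}\big)\,\varepsilon^2\Big),
\]
for an absolute constant $c>0$. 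Comparing the two bounds and rearranging reproduces (\ref{eqn:convergence}). The auxiliary claim $\|\tilde{\mv{G}}^{(n)}\|_F \ge 1/4$ is obtained separately by inspecting the KKT conditions of (\ref{eqn:13}): the symmetric barrier $-2\log\det(\bar{\mv{G}})-2\log\det(\mv{I}-\bar{\mv{G}})$ pushes the analytic center toward the interior and bounds its trace (hence its Frobenius norm) below by an absolute constant, and the monotonicity claim in (\ref{eqn:convergence}) reduces to a one-variable calculus check that the logarithmic numerator grows more slowly than the super-linear denominator.

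The main difficulty will be extracting the exponential factor $\exp\!\big(\tfrac{2(n-1)c}{n+4M_T-1}\big)$ in the lower bound rather than a merely polynomial improvement. This requires simultaneously exploiting the neutrality of each cut at the current analytic center, the fact that the random probing matrices $\mv{B}_n$ in (\ref{learning:3}) span a rich $(M_T^2-1)$-dimensional subspace orthogonal to $\tilde{\mv{g}}^{(n-1)}$, and the self-concordance of the log-det barrier on the Hermitian-matrix box. Transporting the classical Euclidean ACCPM volumetric reduction argument into the non-Euclidean Hessian geometry induced by $\log\det$ is the principal technical hurdle; the remaining work is bookkeeping of constants coming from the complexity parameter $4M_T$ and the ambient dimension $M_T^2$.
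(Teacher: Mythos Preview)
Your overall scaffold---sandwich $\Phi_n^\star$ between two bounds and force a contradiction---is right, but you have the roles of the two bounds reversed, and this is not cosmetic. In the paper's argument (and in the Sun--Toh--Zhao analysis it rests on), the \emph{upper} bound on the potential is the one that comes from the target set: since $\mathcal{X}\subset\mathcal{P}_n$, evaluating $\Phi_n$ at a suitable point of $\mathcal{X}$ yields $\Phi_n^\star \le -(n+4M_T-1)\log(2\varepsilon)$. The \emph{lower} bound is the one built recursively from the neutral-cut structure: each neutral cut through the current analytic center raises the optimal potential by at least $c-\log r_{n+1}$, and $\sum_{i=2}^n r_i^2$ is then controlled by the $4M_T^2(2M_T+1)\log(\cdot)$ term. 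Your proposal places the $\varepsilon$-dependence in the lower bound and the cut-increment structure in the upper bound; but evaluating the barrier at a point inside a ball contained in $\mathcal{P}_n$ can only bound the minimum from \emph{above}, so the lower-bound step as you describe it would not go through. Your own ``main difficulty'' paragraph reveals the mix-up: you attribute the factor $\exp\!\bigl(\tfrac{2(n-1)c}{n+4M_T-1}\bigr)$ to the ball-based lower bound but then say extracting it requires neutrality of the cuts---those ingredients belong to the other side of the sandwich.

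There is also a structural difference worth noting. The paper does not run the self-concordance analysis directly on Hermitian matrices. It introduces the real embedding
\[
\widehat{\mv{A}}=\begin{pmatrix}\mathtt{Re}(\mv{A})&-\mathtt{Im}(\mv{A})\\ \mathtt{Im}(\mv{A})&\mathtt{Re}(\mv{A})\end{pmatrix}\in\mathbb{R}^{2M_T\times 2M_T},
\]
checks that the analytic-center problem (\ref{eqn:13}) and the neutrality condition (\ref{eqn:neutral}) transfer verbatim to a real semidefinite feasibility problem of size $2M_T$ (the factor $2$ in front of each $\log\det$ in (\ref{eqn:13}) is chosen precisely to make the complex and real potentials agree), and then \emph{invokes} three lemmas from \cite{SunTohZhao2002} off the shelf. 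The specific constants $4M_T$, $4M_T^2(2M_T+1)$ and $16M_T^2(2M_T+1)$ in (\ref{eqn:convergence}) are the Sun--Toh--Zhao constants for real symmetric matrices of dimension $2M_T$; they are not derived from a fresh self-concordance calculation. Finally, the randomness and $(M_T^2-1)$-dimensional span of the probing directions $\mv{b}_n$ in (\ref{learning:3}) play no role in the convergence proof beyond guaranteeing the neutrality $\mathtt{tr}(\tilde{\mv{G}}^{(n-1)}\mv{B}_n)=0$; you should not expect them to help produce the exponential factor.
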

\begin{proof}
Note that for the ACCPM based channel learning in Algorithm 1, each iteration of $n>1$ returns one neutral cutting plane; as a result, the required iteration number in Proposition \ref{proposition:1} is equivalent to the total number of required neutral cutting planes. Based on this observation, Proposition \ref{proposition:1} can be proved by borrowing the convergence analysis results of the ACCPM for semi-definite feasibility problems in \cite{SunTohZhao2002}, which shows the worst-case complexity on the total number of required neutral cutting planes given certain solution accuracy. However, \cite{SunTohZhao2002} only considers the case with real matrices, while our ACCPM based channel learning algorithm corresponds to the case involving complex matrices. To overcome this issue, we first find an equivalent real counterpart for the complex ACCPM based channel learning in Algorithm 1, and then prove Proposition \ref{proposition:1} by showing the convergence behavior of the real counterpart algorithm based on the results in \cite{SunTohZhao2002}. The detailed proof is provided in Appendix \ref{appendix:1}.
\end{proof}

In Proposition \ref{proposition:1}, we have obtained the number of feedback intervals required for $\tilde{\mv{G}}^{(n)}$ to converge in the target set $\mathcal{X}$ subject to certain estimation errors, where $\tilde{\mv{G}}^{(n)}$ can be an estimate of any scaled matrix of $\mv{G}$ with $\|\tilde{\mv{G}}^{(n)}\|_{\rm F} \ge 1/4$. However, since our main objective is to estimate the normalized channel matrix $\mv{G}$, it is desirable to further provide the explicit number of required feedback intervals for $\tilde{\mv{G}}$ (the estimate of $\mv{G}$) to converge. This is shown in the following proposition based on Proposition \ref{proposition:1}.
\begin{proposition}\label{proposition:2}
The ACCPM based single-user channel learning algorithm obtains an estimate $\tilde{\mv{G}}$ for the normalized channel matrix $\mv G$ with $\|\tilde{\mv{G}}-{\mv{G}}\|_{\rm F}\le \varepsilon$ in at most $\mathcal O\left(\frac{M_T^3}{\varepsilon^2} \right)$ number of feedback intervals.
\end{proposition}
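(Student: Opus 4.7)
The plan is to bootstrap directly from Proposition \ref{proposition:1} via a short renormalization argument, and then asymptotically invert the convergence inequality (\ref{eqn:convergence}).

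\textbf{Step 1 (renormalization).} By Proposition \ref{proposition:1}, once $n$ satisfies (\ref{eqn:convergence}) we have $\tilde{\mv{G}}^{(n)}\in\mathcal{X}$, so there exists some $\beta>0$ such that $\|\tilde{\mv{G}}^{(n)}-\beta\mv{G}\|_{\rm F}\le\varepsilon$, and in addition $\|\tilde{\mv{G}}^{(n)}\|_{\rm F}\ge 1/4$. Writing $c\triangleq\|\tilde{\mv{G}}^{(n)}\|_{\rm F}$ and using $\|\mv{G}\|_{\rm F}=1$, the reverse triangle inequality yields $|\beta-c|\le\varepsilon$. Since the algorithm outputs $\tilde{\mv{G}}=\tilde{\mv{G}}^{(n)}/c$, I would bound
\begin{align*}
\|\tilde{\mv{G}}-\mv{G}\|_{\rm F}
&=\tfrac{1}{c}\|\tilde{\mv{G}}^{(n)}-c\mv{G}\|_{\rm F}\\
&\le\tfrac{1}{c}\bigl(\|\tilde{\mv{G}}^{(n)}-\beta\mv{G}\|_{\rm F}+|\beta-c|\,\|\mv{G}\|_{\rm F}\bigr)\\
&\le\tfrac{2\varepsilon}{c}\le 8\varepsilon.
\end{align*}
Thus an estimate at accuracy $\varepsilon$ for some scaled target $\beta\mv{G}$ translates into an $\mathcal{O}(\varepsilon)$ estimate for the normalized channel $\mv{G}$. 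Replacing $\varepsilon$ with $\varepsilon/8$ throughout then reduces the claim $\|\tilde{\mv{G}}-\mv{G}\|_{\rm F}\le\varepsilon$ to determining the order of $n$ that meets (\ref{eqn:convergence}) with $\varepsilon/8$ on its left-hand side.

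\textbf{Step 2 (asymptotic inversion).} I would analyze the right-hand side of (\ref{eqn:convergence}) as a function of $n$. The exponential factor $\exp\!\bigl(2(n-1)c/(n+4M_T-1)\bigr)$ is bounded and tends to the positive constant $e^{2c}$ as $n\to\infty$; the affine factor $4n+16M_T-4$ scales like $n$; and the numerator grows like $M_T^3\log\!\bigl(n/M_T^3\bigr)$. Hence the right-hand side is $\Theta\!\bigl(M_T^3\log(n/M_T^3)/n\bigr)$. Setting this below $\varepsilon^2/64$ and solving for $n$ by a standard bootstrap (plug a trial $n=CM_T^3/\varepsilon^2$ and verify) shows that $n=\mathcal{O}(M_T^3/\varepsilon^2)$ suffices, where the slowly varying logarithmic factor $\log(1/\varepsilon)$ is absorbed into the big-$\mathcal{O}$ constant consistent with the notation used in the proposition statement. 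Combining this with Step 1 delivers the claimed $\mathcal{O}(M_T^3/\varepsilon^2)$ bound on the number of feedback intervals.

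\textbf{Main obstacle.} The delicate part is Step 2, because (\ref{eqn:convergence}) is an implicit bound in $n$ containing both a slowly growing logarithm in the numerator and a bounded, $n$-dependent exponential in the denominator; one must carefully identify the dominant cubic dependence on $M_T$, verify that the exponential does not degrade the rate beyond a constant, and handle the logarithmic factor cleanly inside the big-$\mathcal{O}$. A minor side condition in Step 1 is ensuring $c\ge 1/4$, which is only guaranteed once $n$ enters the regime of Proposition \ref{proposition:1}; this is automatic for the same $n$ that certifies the $\varepsilon$-accuracy, so it requires no extra work.
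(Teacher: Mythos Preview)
Your proposal is correct and follows essentially the same two-step structure as the paper's proof: a renormalization argument showing that $\varepsilon/8$-accuracy on the scaled target yields $\varepsilon$-accuracy on the normalized $\mv{G}$ (the paper reaches the same factor of $8$ via a slightly different chain of inequalities bounding $|1-\beta/\|\tilde{\mv{G}}^{(n)}\|_{\rm F}|$, whereas you use the reverse triangle inequality directly), followed by an asymptotic inversion of (\ref{eqn:convergence}) in which the bounded exponential is absorbed into a constant and the logarithmic term is treated as lower order. The only cosmetic difference is in the bookkeeping of Step~1; the identification of the $M_T^3$ dependence and the handling of the $\log$ factor in Step~2 match the paper exactly.
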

\begin{proof}
See Appendix \ref{appendix:proof2}.
\end{proof}

From Proposition \ref{proposition:2}, it is evident that the analytic convergence speed is only related to the number of transmit antennas, $M_T$, but does not depend on the number of receive antennas, $M_R$. This is intuitive, since our algorithm aims to learn the composite channel matrix of $\mv{G} = \mv{H}^H\mv H$, which is of size $M_T \times M_T$. It is worth pointing out that the result in Proposition \ref{proposition:2} provides merely a worst-case upper bound for the required number of feedback intervals, $N_L$; practically, the proposed algorithm can achieve the desired accuracy with much smaller number of feedback intervals, $N_L$, as will be shown by our numerical results in Section \ref{sec:numerical}.

\section{Channel Learning with One-Bit Feedback: Multiuser Case}\label{sec:one-bit:multi}

In this section, we extend the ACCPM based single-user channel learning algorithm to the general multiuser MIMO WET system with $K>1$ ERs. In the following, we first present the multiuser modification of the ACCPM based channel learning algorithm with one-bit feedback, and then provide its convergence analysis.

\subsection{ACCPM Based Multiuser Channel Learning}

In the multiuser case, we aim to implement ACCPM to learn the $K$ normalized channel matrices from the ET to all ERs, i.e., ${\mv{G}}_1, \ldots, {\mv{G}}_K$, by using the collected one-bit feedback information from them. To this end, we need to define the corresponding target set, working sets and query points for each ER $k$, and also find a set of {\it neutral} cutting planes for all $K$ ERs at each feedback interval.


For each ER $k\in\mathcal{K}$, similar to the single-user case, we define the target set as $\mathcal{X}_k=\{\bar{\mv{G}}_k|\mv{0}\preceq \bar{\mv{G}}_k\preceq \mv{I},~\bar{\mv{G}}_k = \beta\mv{G}_k,\forall \beta>0\}$, and have the working sets as
\begin{align}
\mathcal{P}_{k,n} =  &\big\{\bar{\mv{G}}_k\big|\mv{0}\preceq\bar{\mv{G}}_k\preceq \mv{I},~f_{k,i}\mathtt{tr}\left(\bar{\mv{G}}_k\left(\mv{S}_i^{\rm{L}}-\mv{S}_{i-1}^{\rm{L}}\right)\right) \le 0, \big. \nonumber\\
&\big.~~~~~~~~~~~~~~2\le i\le n\big\}, n \ge 0,\label{learning:4:multiK}
\end{align}
where the inequality of
\begin{align}
f_{k,n}\mathtt{tr}\left(\bar{\mv{G}}_{k} (\mv{S}_n^{\rm{L}}-\mv{S}_{n-1}^{\rm{L}})\right) \le 0\label{eqn:energybeam:0:multiK}
\end{align}
corresponds to a cutting plane obtained at the $n$th interval based on ER $k$'s feedback of $f_{k,n}$, $n=2,\ldots,N_L$. From (\ref{learning:4:multiK}), we can obtain the analytic center of $\mathcal{P}_{k,n}$ (also the query point for the next interval $n + 1$), given by
\begin{align}
 \tilde{\mv{G}}_k^{(n)} =&\mathtt{arg}\mathop\mathtt{min}_{\mv{0}\preceq\bar{\mv{G}}_k\preceq \mv{I}}- 2\log \det \left(\bar{\mv{G}}_k\right) -  2\log \det \left(\mv{I}-\bar{\mv{G}}_k\right) \nonumber\\& -\sum_{i=2}^{n} \log\left(-f_{k,i}\mathtt{tr}\left(\bar{\mv{G}}_k \left(\mv{S}_i^{\rm{L}}-\mv{S}^{\rm{L}}_{i-1}\right)\right)\right), n\ge 0.\label{eqn:13:multiK}
\end{align}
Thus, we have obtained the target set, working sets and query points for each ER $k \in \mathcal{K}$.

Now, to complete ACCPM, we also need to design the transmit covariance $\mv{S}_n^{\rm{L}}$'s to ensure that the cutting plane in (\ref{eqn:energybeam:0:multiK}) is neutral. That is, at interval $n=2,\ldots,N_L$, it is desirable for each ER $k\in\mathcal{K}$ that
\begin{align}
\mathtt{tr}\left(\tilde{\mv{G}}_k^{(n-1)} \left(\mv{S}_n^{\rm{L}}-\mv{S}_{n-1}^{\rm{L}}\right)\right) = 0.\label{eqn:neutral:K}
\end{align}
Note that given $K>1$ ERs in the system, the transmit covariance matrix $\mv{S}_n^{\rm{L}}$ needs to satisfy $K$ equations in (\ref{eqn:neutral:K}) for $k=1,\ldots,K,$ at the same time, in contrast to one single equation in the single-user case with $K=1$. If $K > M_T^2-1$, finding such an $\mv{S}_n^{\rm{L}}$ becomes infeasible, since in this case, (\ref{eqn:neutral:K}) corresponds to a set of $K$ equations with $M_T^2$ (real) unknowns. To overcome this issue, we propose to group the $K$ ERs into one or more subsets each consisting of no more than $M_T^2-1$ number of ERs; accordingly, at each feedback interval, the ET only needs to ensure that the updated transmit covariance $\mv{S}_n^{\rm{L}}$ satisfies (\ref{eqn:neutral:K}) for the ERs in the corresponding subset, instead of all $K$ ERs if $K > M^T_2-1$.

Specifically, we divide the $K$ ERs into $A = \lceil\frac{K}{M_T^2-1}\rceil$ subsets denoted by $\mathcal{K}_a = \left\{(a-1)\lceil\frac{K}{A}\rceil+1,\ldots,a\lceil\frac{K}{A}\rceil\right\}$, $\forall a\in\{1,\ldots,A-1\}$, and $\mathcal{K}_A = \left\{(A-1)\lceil\frac{K}{A}\rceil+1,\ldots,K\right\}$, where $\bigcup \limits_{a=1}^A \mathcal{K}_a = \mathcal{K}$ and $|\mathcal{K}_a| \le M_T^2-1, \forall a$. Accordingly, we also partition the $N_L$ feedback intervals in the channel learning phase of the two-phase protocol into $A$ subsets as shown in Fig. \ref{fig:protocol:multiuser}, which are given by $\mathcal{N}_a^{\rm L} = \left\{(a-1)\lceil\frac{N_L}{A}\rceil+1,\ldots,a\lceil\frac{N_L}{A}\rceil\right\}$, $\forall a\in\{1,\ldots,A-1\}$, and $\mathcal{N}_A^{\rm L} = \left\{(A-1)\lceil\frac{N_L}{A}\rceil+1,\ldots,N_L\right\}$, where $\bigcup \limits_{a=1}^A \mathcal{N}^{\rm L}_a = \{1,\ldots,N_L\}$ and $\mathcal{N}^{\rm L}_a \bigcap \mathcal{N}^{\rm L}_b = \phi, \forall a\neq b$.{\footnote{How to optimally group ERs and partition feedback intervals over different groups to achieve the best channel learning performance is an interesting problem, which, however, is beyond the scope of this paper.}} Notice that for each partitioned subset of feedback intervals, $\mathcal{N}_a^L$, only ERs in the corresponding user subset $\mathcal{K}_a$ need to send their one-bit feedbacks to the ET for learning their MIMO channels; accordingly, over the intervals in $\mathcal{N}_a^L$, the ET obtains cutting planes in (\ref{eqn:energybeam:0:multiK}) only for the corresponding ERs in $\mathcal{K}_a$. Therefore, based on the above partitions, if $K > M_T^2-1$, we need to slightly modify the working sets in (\ref{learning:4:multiK}) and the analytic centers (query points) in (\ref{eqn:13:multiK}) for each $k\in\mathcal{K}_a, a\in\{1,\ldots,A-1\}$ as
\begin{align}
\mathcal{P}_{k,n} = & \left\{\bar{\mv{G}}_k\big|\mv{0}\preceq\bar{\mv{G}}_k\preceq \mv{I},~f_{k,i}\mathtt{tr}\left(\bar{\mv{G}}_k\left(\mv{S}_i^{\rm{L}}-\mv{S}_{i-1}^{\rm{L}}\right)\right) \le 0, \right.\nonumber \\ &~~~~~~~~~~~~~~~~~~\left.i \in \{2,\ldots,n\} \bigcap \mathcal{N}_a^L\right\}, \label{learning:4:multiK:modi} \\
 \tilde{\mv{G}}_k^{(n)} =&\mathtt{arg}\mathop\mathtt{min}_{\mv{0}\preceq\bar{\mv{G}}_k\preceq \mv{I}}- 2\log \det \left(\bar{\mv{G}}_k\right) -  2\log \det \left(\mv{I}-\bar{\mv{G}}_k\right) \nonumber \\& \sum_{i=2,i\in \mathcal{N}_a^L}^{n} \log\left(-f_{k,i}\mathtt{tr}\left(\bar{\mv{G}}_k \left(\mv{S}_i^{\rm{L}}-\mv{S}^{\rm{L}}_{i-1}\right)\right)\right).\label{eqn:13:multiK:modi}
\end{align}

\begin{figure}
\centering
 \epsfxsize=1\linewidth
    \includegraphics[width=8.8cm]{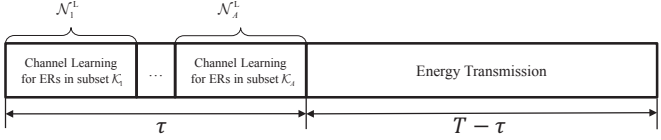}\vspace{0.3em}
\caption{The transmission protocol with feedback interval partition for the ACCPM based multiuser channel learning.} \label{fig:protocol:multiuser}
\end{figure}



Next, we design $\mv{S}_n^{\rm{L}}$'s such that at any feedback interval $n\in \mathcal{N}^{\rm L}_a$ (except $n=1$), the equations in (\ref{eqn:neutral:K}) hold for the subset of ERs in $\mathcal{K}_a$, $a \in\{1,\ldots, A\}$. We set $\mv{S}^{\rm{L}}_1=\frac{P}{M_T}\mv{I}$ for interval $n=1$, and
\begin{align}\label{eqn:A:multi:K}
\mv{S}^{\rm{L}}_n=\mv{S}^{\rm{L}}_{n-1}+\bar{\mv{B}}_n
\end{align}
for the remaining intervals  $n=2,\ldots,N_L$, where $\bar{\mv{B}}_n\in\mathbb{C}^{M_T\times M_T}$ denotes the probing matrix for the multiuser case (as opposed to ${\mv{B}}_n$ in (\ref{eqn:A}) for the single-user case) to be designed such that $\mathtt{tr}(\tilde{\mv{G}}_k^{(n-1)}\bar{\mv{B}}_n)=0, \forall k\in \mathcal{K}_a$, with $n\in \mathcal{N}^{\rm L}_a$. By denoting $\bar{\mv{b}}_n=\mathrm{cvec}\left(\bar{\mv{B}}_n\right)$ and $\tilde{\mv{g}}_k^{(n-1)} = \mathrm{cvec}\left(\tilde{\mv{G}}_k^{(n-1)}\right)$, then finding such a $\bar{\mv{B}}_n$ is equivalent to finding a vector $\bar{\mv{b}}_n$ that is orthogonal to {\it all the $|\mathcal{K}_a|$ vectors},  $\{\tilde{\mv{g}}_k^{(n-1)}\}_{k\in\mathcal{K}_a}$, i.e., $\tilde{\mv{g}}_k^{(n-1)T}\bar{\mv{b}}_n = 0, \forall k\in \mathcal{K}_a$. To do so, we define a $|\mathcal{K}_a| \times M_T^2$ real matrix denoted by $\mv \Phi_n$ with columns composed by the $|\mathcal{K}_a|$ normalized vectors $\left\{\frac{\tilde{\mv{g}}_k^{(n-1)}}{\|\tilde{\mv{g}}_k^{(n-1)}\|}\right\}$ with $k\in\mathcal{K}_a$, based on which we obtain a projection matrix $\bar{\mv{F}}_n = \mv{I} - {\mv \Phi_n{\mv \Phi_n^T}}$. Let  $\bar{\mv{F}}_n = \bar{\mv{V}}_n\bar{\mv{V}}^T_n$, where $\bar{\mv{V}}_n \in \mathbb{R}^{M_T^2\times (M_T^2-|\mathcal{K}_a|+1)}$ satisfies $\bar{\mv{V}}^T_n\mv \Phi_n= \mv{0}$ and $\bar{\mv{V}}^T_n\bar{\mv{V}}_n = \mv{I}$. Then we can find $\bar{\mv{b}}_n$ by setting
\begin{align}\label{learning:3:multi:K}
\bar{\mv{b}}_n = \bar{\mv{V}}_n\bar{\mv{p}},
\end{align}
where $\bar{\mv{p}}\in\mathbb{R}^{(M_T^2-|\mathcal{K}_a|+1)\times 1}$ is a randomly generated vector. With the obtained $\bar{\mv{b}}_n$, we have the probing matrix $\bar{\mv{B}}_n = \mathrm{cmat}(\bar{\mv{b}}_n)$, and accordingly obtain $\mv{S}_n^{\rm{L}}$.{\footnote{Similar to the single-user case, we choose $\|\bar{\mv{p}}\|=P/10$ in order to obtain $\mv{S}_n^{\rm L}$ that satisfies both $\mathtt{tr}(\mv{S}_n^{\rm L})\leq P$ and $\mv{S}_n^{\rm L}\succeq \mv{0}$. }}

\begin{table}[!t]\scriptsize
\caption{ACCPM Based Channel Learning Algorithm for Multiuser Case}
\label{table3} \centering
\begin{tabular}{|p{8.5cm}|}
\hline
\textbf{Algorithm 2}\\
\hline\vspace{0.01cm}
  1) {\bf Initialization:} Set $n=0$, $Q_0^{\rm{L}}=0$, and $\mv{S}_1^{\rm{L}}=\frac{P}{M_T}\mv{I}$; divide the $K$ ERs and $N_L$ feedback intervals into $A = \lceil\frac{K}{M_T^2-1}\rceil$ subsets. \\
2) {\bf Repeat:}
  \begin{itemize} \setlength{\itemsep}{0pt}
    \item[a)] $n \gets n+1$;
    \item[b)] The ET transmits with $\mv{S}_n^{\rm{L}}$;
    \item[c)] Find the user subset index $a$ such that $n+1\in \mathcal{N}^{\rm L}_a$;
    \item[d)] Each ER $k\in\mathcal{K}_a$ feeds back $f_{k,n}=-1$ (or $1$) if $Q_{k,n}^{\rm{L}} \ge Q_{k,n-1}^{\rm{L}}$ (or otherwise);
    \item[e)] The ET computes the query points for all ERs in subset $a$, i.e., $\tilde{\mv{G}}_k^{(n)}$'s, $\forall k\in\mathcal{K}_a$, given in (\ref{eqn:13:multiK:modi});
    \item[f)] The ET computes $\bar{\mv{b}}_{n+1}$ from (\ref{learning:3:multi:K}) based on $\{\tilde{\mv{G}}_k^{(n)}\}_{ k\in\mathcal{K}_a}$, obtains $\bar{\mv{B}}_{n+1}=\mathrm{cmat}(\bar{\mv{b}}_{n+1})$, and updates $\mv{S}^{\rm{L}}_{n+1}=\mv{S}_{n}^{\rm{L}}+\bar{\mv{B}}_{n+1}$.
  \end{itemize}
  3) {\bf Until}  $n\ge N_L$.\\
  4) The ET computes $\tilde{\mv{G}}_k^{(N_L)}$ from (\ref{eqn:13:multiK:modi}) and estimates $\tilde{\mv{G}}_k={\tilde{\mv{G}}^{(N_L)}_k}\big/{\left\|\tilde{\mv{G}}^{(N_L)}_k\right\|_{\rm F}},\forall k\in\mathcal{K}$.\\
 \hline
\end{tabular}\vspace{-1em}
\end{table}


To summarize, we present the ACCPM based channel learning algorithm with one-bit feedback for the multiuser case in Table II as Algorithm 2. 


\subsection{Convergence Analysis: Multiuser Case}\label{channelLearning:multiK}

We provide the convergence analysis for the  ACCPM based multiuser channel learning algorithm in the following proposition.

\begin{proposition}\label{proposition:2:multiK}
The  ACCPM based multiuser channel learning algorithm obtains $K$ MIMO channel estimates for all ERs, i.e., $\{\tilde{\mv{G}}_k\}$, with $\|\tilde{\mv{G}}_k-{\mv{G}}_k\|_{\rm F}\le \varepsilon, \forall k\in\mathcal{K},$ in at most $\mathcal O\left(\left\lceil\frac{K}{M_T^2-1}\right\rceil\frac{M_T^3}{\varepsilon^2} \right)$ number of feedback intervals.
\end{proposition}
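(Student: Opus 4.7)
The plan is to reduce the multiuser convergence analysis to the single-user one (Proposition \ref{proposition:2}) by exploiting the deliberate decoupling built into Algorithm 2 through the partitioning of both ERs and feedback intervals. The key structural observation is this: by construction, over any feedback interval $n\in\mathcal{N}_a^{\rm L}$ the probing matrix $\bar{\mv{B}}_n$ is chosen (via (\ref{learning:3:multi:K})) to be orthogonal to the query points of \emph{every} ER $k\in\mathcal{K}_a$ in the $\mathrm{cvec}(\cdot)$ representation, so that $\mathtt{tr}(\tilde{\mv{G}}_k^{(n-1)}\bar{\mv{B}}_n)=0$ holds simultaneously for all such $k$. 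Hence the cutting plane in (\ref{eqn:energybeam:0:multiK}) returned at interval $n$ is neutral with respect to the current query point $\tilde{\mv{G}}_k^{(n-1)}$ of every user in the active subset.

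Building on this, I would first argue that, when restricted to ER $k\in\mathcal{K}_a$ and to the intervals in $\mathcal{N}_a^{\rm L}$, the sequence of working sets $\mathcal{P}_{k,n}$ in (\ref{learning:4:multiK:modi}) and analytic centers $\tilde{\mv{G}}_k^{(n)}$ in (\ref{eqn:13:multiK:modi}) is precisely what the single-user Algorithm 1 would generate if one ignored all intervals outside $\mathcal{N}_a^{\rm L}$. Indeed, those ignored intervals contribute no log-barrier terms to (\ref{eqn:13:multiK:modi}), and the probing matrices used inside $\mathcal{N}_a^{\rm L}$ still yield neutral cuts as required by ACCPM. Therefore Proposition \ref{proposition:2} applies verbatim with $N_L$ replaced by $|\mathcal{N}_a^{\rm L}|$, giving that $\|\tilde{\mv{G}}_k-\mv{G}_k\|_{\rm F}\le\varepsilon$ is attainable after at most $\mathcal{O}(M_T^3/\varepsilon^2)$ feedback intervals drawn from $\mathcal{N}_a^{\rm L}$.

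To finish, I would sum this per-subset count over the $A=\lceil K/(M_T^2-1)\rceil$ subsets. Since each of $\mathcal{N}_1^{\rm L},\dots,\mathcal{N}_A^{\rm L}$ must carry $\mathcal{O}(M_T^3/\varepsilon^2)$ intervals to ensure the desired accuracy for every ER it serves, and since the subsets partition $\{1,\dots,N_L\}$, the total requirement is
\begin{align*}
N_L \;=\; \sum_{a=1}^{A} |\mathcal{N}_a^{\rm L}| \;=\; A\cdot \mathcal{O}\!\left(\frac{M_T^3}{\varepsilon^2}\right) \;=\; \mathcal{O}\!\left(\left\lceil\frac{K}{M_T^2-1}\right\rceil\frac{M_T^3}{\varepsilon^2}\right),
\end{align*}
as claimed.

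The main obstacle I expect is rigorously justifying the equivalence between the per-subset execution of Algorithm 2 and an instance of Algorithm 1. Two subtleties must be handled: (i) the absolute starting covariance $\mv{S}_1^{\rm L}=\frac{P}{M_T}\mv{I}$ differs from whatever covariance $\mv{S}_{n-1}^{\rm L}$ the multiuser algorithm carries into the first interval of $\mathcal{N}_a^{\rm L}$, but since the cutting planes in (\ref{eqn:energybeam:0:multiK}) depend only on the \emph{differences} $\mv{S}_n^{\rm L}-\mv{S}_{n-1}^{\rm L}=\bar{\mv{B}}_n$, the absolute offset is immaterial to the geometry of the working sets; and (ii) one must verify that the constraint $|\mathcal{K}_a|\le M_T^2-1$ leaves at least a one-dimensional null space for $\bar{\mv{V}}_n$ in (\ref{learning:3:multi:K}), so that a valid neutral $\bar{\mv{B}}_n\neq \mv{0}$ can always be produced — this is precisely the reason the grouping threshold was chosen to be $M_T^2-1$. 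Once these points are addressed, the reduction to Proposition \ref{proposition:2} and the subsequent summation over the $A$ subsets are routine.
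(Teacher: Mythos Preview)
Your proposal is correct and follows essentially the same approach as the paper: observe that each interval in $\mathcal{N}_a^{\rm L}$ produces a neutral cut for every ER in $\mathcal{K}_a$, invoke Proposition~\ref{proposition:2} per subset to get $\mathcal{O}(M_T^3/\varepsilon^2)$ intervals, and multiply by $A=\lceil K/(M_T^2-1)\rceil$. Your treatment is in fact more careful than the paper's, which does not explicitly address the two subtleties you flag (irrelevance of the absolute covariance offset and the null-space dimension guarantee).
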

\begin{proof}
Given $A=\left\lceil\frac{K}{M_T^2-1}\right\rceil$ partitioned user subsets, $\mathcal{K}_1, \ldots, \mathcal K_A$, each with no more than $M_T^2-1$ ERs, we consider any subset $a\in\{1,\ldots,A\}$. According to (\ref{eqn:neutral:K}) and (\ref{learning:3:multi:K}), at each feedback interval we can simultaneously find $|\mathcal{K}_a| \le M_T^2-1$ neutral cutting planes each for one ER in $\mathcal{K}_a$. As a result, after $\mathcal O\left(\frac{M_T^3}{\varepsilon^2} \right)$ number of feedback intervals, we will have $\mathcal O\left(\frac{M_T^3}{\varepsilon^2} \right)$ neutral cutting planes for each ER in $\mathcal{K}_a$. Based on this argument together with Proposition \ref{proposition:2}, it follows that in at most $\mathcal O\left(\frac{M_T^3}{\varepsilon^2} \right)$ intervals we can have $\|\tilde{\mv{G}}_k-{\mv{G}}_k\|_{\rm F}\le \varepsilon, \forall k\in\mathcal{K}_a$. Given this result and considering that there are in total $A=\left\lceil\frac{K}{M_T^2-1}\right\rceil$ subsets of MIMO channels to be estimated, Proposition \ref{proposition:2:multiK} thus follows.
\end{proof}

Proposition \ref{proposition:2:multiK} provides the worst-case convergence performance for arbitrary values of $M_T$, $M_R$ and $K$. Note that if $K\le M_T^2-1$, it immediately follows from Proposition \ref{proposition:2:multiK} that our proposed algorithm is able to learn $K>1$ ERs' MIMO channels simultaneously without reducing the analytic convergence speed.

\section{Numerical Results}\label{sec:numerical}


\begin{figure}
\centering
 \epsfxsize=1\linewidth
    \includegraphics[width=7cm]{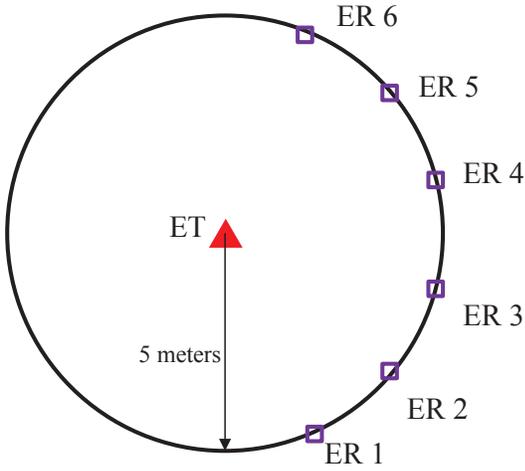}\vspace{0.3em}
\caption{System setup for simulation.} \label{fig:simu_setup}
\end{figure}

In this section, we provide extensive simulation results to evaluate the performance of our proposed ACCPM based channel learning algorithm with one-bit feedback. We consider a multiuser broadcast system for WET as  shown in Fig. \ref{fig:simu_setup}, where 6  ERs are located at an equal distance of 5 meters from the ET, but with different directions. Accordingly, it is assumed that the average path loss from the ET to all ERs is identically 40 dB.  For the considered short transmission distance, the line-of-sight (LOS) signal is dominant, and thus the Rician fading is used to model the channel from the ET to each ER. Specifically, we have
\begin{align}
\mv{H}_k' = \sqrt{\frac{K_R}{1+K_R}}\mv{H}_k^{\rm{LOS}}+\sqrt{\frac{1}{1+K_R}}\mv{H}_k^{\rm{NLOS}},  k\in\mathcal{K},
\end{align}
where $\mv{H}^{\rm{LOS}}_k \in\mathbb{C}^{M_R\times M_T}$ is the LOS deterministic component, $\mv{H}^{\rm{NLOS}}_k \in\mathbb{C}^{M_R\times M_T}$ denotes the non-LOS Rayleigh fading component with each element being an independent circularly symmetric complex Gaussian (CSCG) random variable with zero mean and covariance of $10^{-4}$ (to be consistent with the assumed average power attenuation of $-40$ dB), and $K_R$ is the Rician factor set to be $5$ dB. For the LOS component, we use the far-field uniform linear antenna array model with each row of $\mv{H}^{\rm{LOS}}_k$ expressed as $10^{-2}\left[1~ e^{j\theta_k}~\cdots~e^{j(M_T-1)\theta_k}\right]$ with $\theta_k = -\frac{2\pi \kappa\sin(\phi_k)}{\lambda}$, where  $\kappa$ is the spacing between two successive antenna elements at the ET, $\lambda$ is the carrier wavelength, and $\phi_k$ is the direction of the ER $k$ from the ET. We set $\kappa=\frac{\lambda}{2}$ and $\phi_k = -75^\circ + 30^\circ(k-1), \forall k\in\mathcal{K}$ (see Fig. \ref{fig:simu_setup}). Furthermore, we set the transmit power at the ET and the energy harvesting efficiency at each ER as $P=30$ dBm ($1$ W) and $\varsigma = 50\%$, respectively. In the following, we present our simulation results for the cases of single-user and multiuser WET systems, respectively.

\subsection{Single-User Setup}

\begin{figure}
\centering
 \epsfxsize=1\linewidth
    \includegraphics[width=8.8cm]{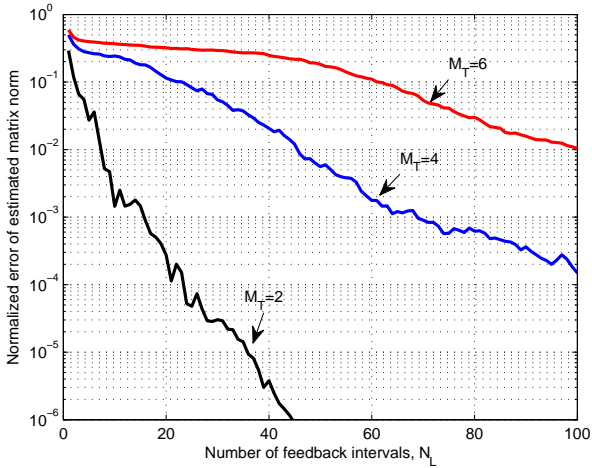}
\caption{Normalized error of estimated matrix norm versus $N_L$ with different number of transmit antennas, $M_T$, where $M_R=2$ and $K=1$.} \label{fig:1}
\end{figure}

\begin{figure}
\centering
 \epsfxsize=1\linewidth
    \includegraphics[width=8.8cm]{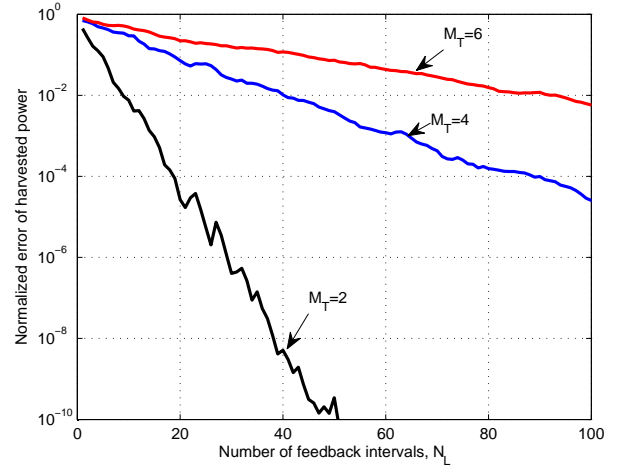}
\caption{Normalized error of harvested power versus $N_L$ with different number of transmit antennas, $M_T$, where $M_R=2$ and $K=1$.} \label{fig:2}
\end{figure}

\begin{figure}
\centering
 \epsfxsize=1\linewidth
    \includegraphics[width=8.8cm]{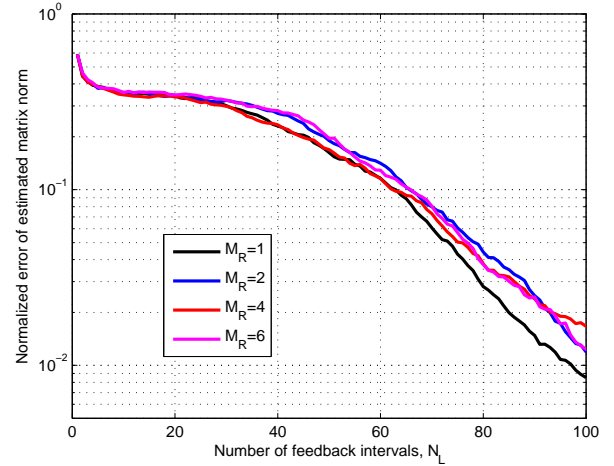}
\caption{Normalized error of estimated matrix norm versus $N_L$ with different number of receive antennas, $M_R$, with fixed $M_T=6$ and $K=1$.} \label{fig:3}
\end{figure}

First, we consider the point-to-point or single-user MIMO WET system with ER  1 only shown in Fig. \ref{fig:simu_setup}. We show the convergence performance of the proposed ACCPM based channel learning algorithm in Figs. \ref{fig:1}-\ref{fig:3}. In Figs. \ref{fig:1} and \ref{fig:2}, we plot the normalized error of estimated matrix norm, i.e., $\|\tilde{\mv{G}} - \mv{G}\|_{\rm F}$, and the normalized error of harvested power, i.e., $\frac{\lambda_E - \tilde{\mv{v}}_E^H\mv{G}\tilde{\mv{v}}_E}{\lambda_E}$, versus the number of feedback intervals in the channel learning phase, $N_L$, with different number of transmit antennas,  $M_T$, and fixed number of receive antennas,  $M_R=2$. From both figures, it is observed that the proposed algorithm achieves an exponentially (or linearly in the log-scale shown in the figures) decreasing error over the number of feedback intervals, which shows its fast convergence in practical implementation. It is also observed that as the number of transmit antennas, $M_T$, increases, the algorithm convergence speed becomes slower. This is due to the fact that for ACCPM, there are in total $M_T^2$ independent real variables in the matrix $\mv{G}$ to be estimated, whose number increases quadratically with $M_T$. Moreover, Fig. \ref{fig:3} shows the normalized error of estimated matrix norm versus $N_L$ with different number of receive antennas, $M_R$, where the number of transmit antennas is fixed as $M_T = 6$. It is observed that the algorithm converges almost at the same speed for different values of $M_R$. This result is consistent with Proposition \ref{proposition:2}, which shows that the analytic convergence speed of the ACCPM based channel learning algorithm is irrelevant to the number of receive antennas, $M_R$.

Fig. \ref{fig:4} compares the channel learning performance of the proposed ACCPM based algorithm against three benchmark algorithms (namely, CJT \cite{Noam2013}, gradient sign \cite{BanisterZeidler2003} and distributed beamforming \cite{Mudumbai2010}; for the details of implementing these algorithms, please refer to Appendix \ref{appendixC}) in terms of the normalized error of harvested power versus $N_L$. It is observed that the CJT algorithm results in discrete error points corresponding to different values of the line search accuracy parameter, $\eta$, and different numbers of sweeps implemented (see Appendix C-1). This is due to the fact that this method can obtain an updated channel estimate only after each complete sweep over a certain number of feedback intervals. For the other two algorithms of gradient sign and distributed beamforming, it is observed that larger step sizes (i.e., $\xi = 0.05$ for gradient sign and $\chi = 0.1\pi$ for distributed beamforming, as defined in Appendices C-2 and C-3, respectively) yield faster convergence speed but also more notable fluctuations as compared to the case of smaller step sizes (i.e., $\xi = 0.01$ for gradient sign and $\chi = 0.04\pi$ for distributed beamforming). In terms of convergence speed, ACCPM is observed to significantly outperform the other three algorithms.

\begin{figure}
\centering
 \epsfxsize=1\linewidth
    \includegraphics[width=8.8cm]{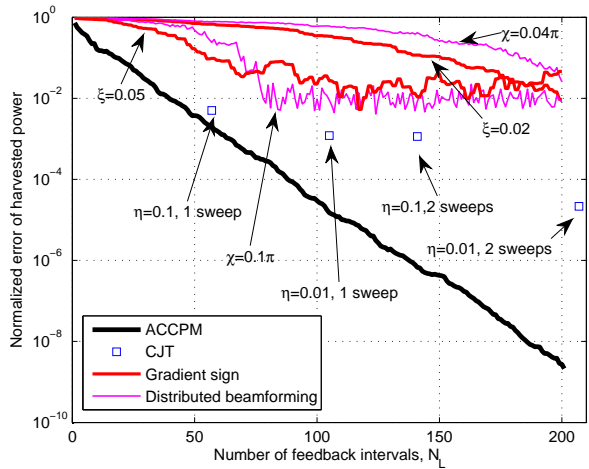}
\caption{Convergence performance comparison for different channel learning algorithms with $M_T=4$, $M_R=2$ and $K=1$.} \label{fig:4}
\end{figure}

\begin{figure}
\centering
 \epsfxsize=1\linewidth
    \includegraphics[width=8.8cm]{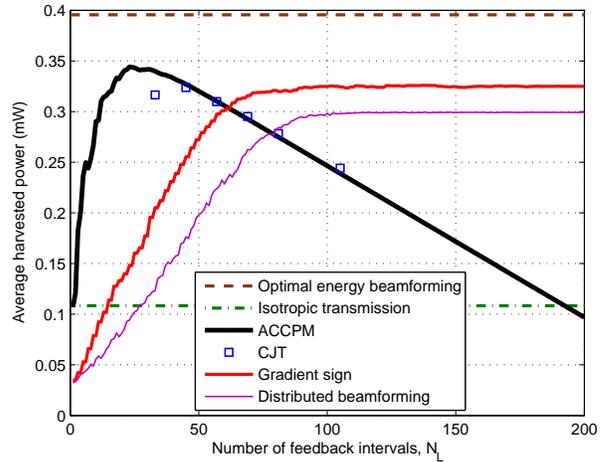}
\caption{Average harvested power comparison for different algorithms with $N=200$, $M_T=4$, $M_R=2$ and $K=1$.} \label{fig:5}
\end{figure}

Fig. \ref{fig:5} shows the average harvested power per block, i.e., $Q_{\rm{total}}/T$ with $Q_{\rm{total}}$ given in (\ref{eqn:Qpro}), versus $N_L$ with fixed block-length, $N=200$, for different algorithms. We set $M_T=4$ and $M_R=2$. For comparison, besides the three benchmark algorithms previously introduced, we also plot the maximum harvested power, $Q_{\max}/T$, by the OEB assuming perfect CSI at the ET as a performance upper bound, as well as the harvested power in the case without CSI at the ET by an isotropic transmission with $\mv{S}=\frac{P}{M_T}\mv{I}$ as a performance lower bound. It is observed that for the ACCPM and CJT based channel learning, the average harvested power first increases and then decreases as $N_L$ increases, and the maximum power value is achieved when $N_L=22$ and $N_L=42$, respectively; while for the gradient sign and distributed beamforming based algorithms, the average harvested power increases consistently with $N_L$. The explanation is as follows. For ACCPM and CJT, the transmit covariance matrices at the ET are in general of full-rank during the channel learning phase, which are designed for channel learning only and thus may not be optimal for energy transmission; therefore, there is in general a trade-off in the time allocations between the channel learning phase versus the energy transmission phase given a fixed finite block-length $N$, to achieve the maximum average harvested power. For gradient sign and distributed beamforming, the transmitted energy beam during the channel learning phase is continuously improved toward the OEB, and thus no energy transmission phase is needed; as a result, the average harvested power increases consistently with increasing $N_L$ until it becomes equal to $N$.

\begin{figure}
\centering
 \epsfxsize=1\linewidth
    \includegraphics[width=8.8cm]{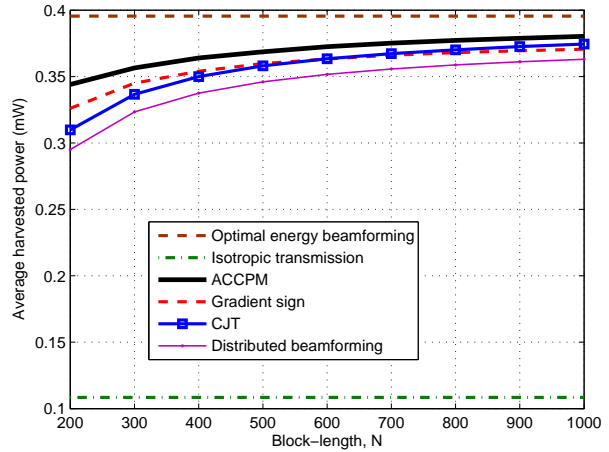}
\caption{Average harvested power  comparison for different algorithms with $M_T=4$ and $M_R=2$.} \label{fig:6}
\end{figure}

Fig. \ref{fig:6} shows the average harvested power per block versus the block-length $N$, where $N_L$ is chosen for each given $N$ and each channel learning algorithm to maximize the corresponding average harvested power. For all the proposed and three benchmark channel learning algorithms, it is observed that as $N$ increases, the average harvested power increases to more closely approach the performance upper bound by the OEB with perfect CSI. This is due to the fact that with larger block-length, the MIMO channel can be estimated more accurately but with smaller percentage of time in each block. The proposed ACCPM based algorithm is observed to achieve higher average harvested power than the other three schemes of CJT, gradient sign and distributed beamforming. This is consistent with its best channel learning performance previously shown in Fig.  \ref{fig:4}.

\subsection{Multiuser Setup}

\begin{figure}
\centering
 \epsfxsize=1\linewidth
    \includegraphics[width=8.8cm]{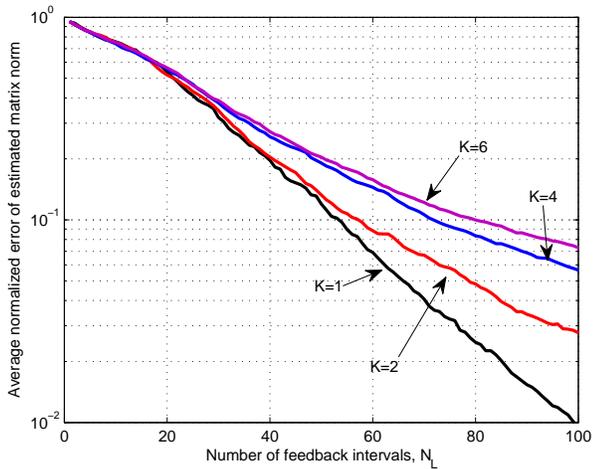}
\caption{Average normalized error of estimated matrix norm versus $N_L$ with different number of ERs,   $K$, where $M_T=4$ and $M_R=2$.} \label{fig:7}
\end{figure}

\begin{figure}
\centering
 \epsfxsize=1\linewidth
    \includegraphics[width=8.8cm]{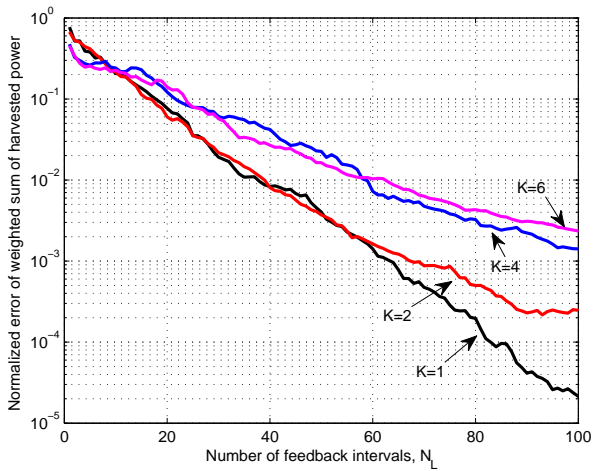}
\caption{Normalized error of weighted sum of harvested power versus $N_L$ with different number of ERs, $K$, where $M_T=4$ and $M_R=2$.} \label{fig:8}
\end{figure}

Next, we consider the multiuser MIMO WET system with $K> 1$ ERs shown in Fig. \ref{fig:simu_setup}. In order to avoid the performance variations due to different user groupings for the proposed multiuser channel learning algorithm (i.e., Algorithm 2), here we focus on the case of $K \le  M_T^2-1$ without the need of user grouping.

Figs. \ref{fig:7} and \ref{fig:8} show the average normalized error of estimated matrix norm, i.e., $\sum_{k\in\mathcal{K}}\|\tilde{\mv{G}}_k - \mv{G}_k\|_{\rm F}/K$, and the normalized error of weighted sum of average harvested power, i.e., $\frac{\lambda_E - \tilde{\mv{v}}_E^H\mv{G}\tilde{\mv{v}}_E}{\lambda_E}$, with different number of ERs, $K$, by fixing $M_T=4$ and $M_R=2$. From both figures, it is observed that as $K$ increases, the algorithm converges more slowly. However, it should be pointed out that this observation does not contradict our result in Section \ref{sec:one-bit:multi} that the  ACCPM based multiuser channel learning algorithm can efficiently estimate multiple ERs' channels at the same time without reducing the analytic convergence speed, provided that $K\le M_T^2-1$, since it is only a worst-case analysis. Moreover, it is observed  from Fig. \ref{fig:8} that after 60 feedback intervals, at least 99\% of the maximum weighted-sum power is achieved for all values of $K$.

\begin{figure}
\centering
 \epsfxsize=1\linewidth
    \includegraphics[width=8.8cm]{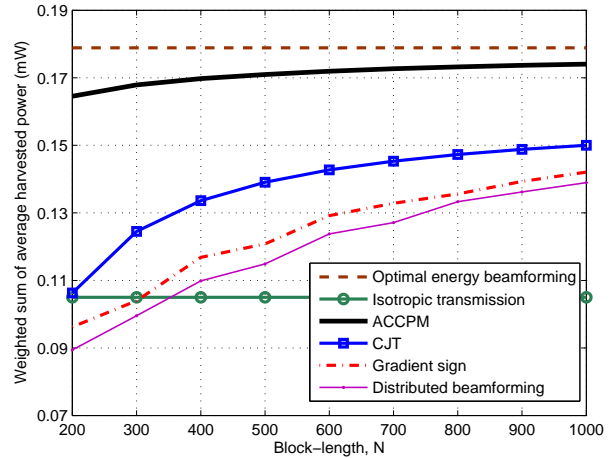}
\caption{Weighted sum of average harvested power comparison for different algorithms, where $M_T=4$, $M_R=2$ and $K=6$.} \label{fig:9}
\end{figure}

Fig. \ref{fig:9} shows the weighted sum of average harvested power per block, i.e., $Q_{\rm{total}}/T$, versus $N$, where $N_L$ is chosen to be the optimal value for each channel learning algorithm and each given $N$. We set $M_T=4$,  $M_R=2$ and $K=6$. Similar to Fig. \ref{fig:6} for the single-user case, it is observed that as $N$ increases, the weighted sum of average harvested power increases for all the proposed and three benchmark algorithms. However, the performance gains of the proposed ACCPM over the three benchmark algorithms become more substantial as compared to that in the single-user case shown in Fig. \ref{fig:6}. This result demonstrates the benefit of ACCPM based channel learning due to simultaneously estimating more than one ERs' MIMO channels, as compared to the three benchmark algorithms that can only estimate the eigenvectors or the dominant eigenmode of one ER's MIMO channel at one time (see Appendix \ref{appendixC} for more details). As a result, it is concluded that our proposed ACCPM based channel learning algorithm is more appealing for the MIMO WET system with multiple ERs.

\section{Concluding Remarks}\label{sec:conclusion}
This paper proposed a new channel learning design approach for multiuser MIMO WET systems. By requiring each ER to send back to the ET only one bit per feedback interval to indicate the increase or decrease of its harvested energy, we show that the ET is able to adjust the training energy beams over different intervals to estimate multiuser MIMO channels simultaneously based on the principle of ACCPM. Through both analysis and simulation, it is shown that our proposed ACCPM based channel learning is more appealing as compared to existing methods with one-bit feedback, especially when the number of ERs is large. It is our hope that this paper will open up an avenue for future investigation of new channel training/feedback techniques for MIMO WET systems.

It is worth noting that in this paper, we considered the one-bit feedback at each ER to simplify the feedback design and receiver complexity and also minimize the energy used for feedback communication. Nevertheless, there may exist alternative feedback designs to further improve the channel learning performance. For example, if each ER can send more than one feedback bits per interval, the channel learning performance should be further improved. How to extend the ACCPM based channel learning with more than one feedback bits per interval is an open problem, which is worth of further investigation.

It is also worth pointing out that although we considered the channel learning for MIMO WET systems in this paper, our proposed ACCPM based algorithm can also be extended to other multiuser MIMO systems in wireless communication for channel learning with low-complexity feedback, e.g., the cognitive radio system considered in \cite{Noam2013} and the precoding design problem studied in \cite{BanisterZeidler2003}.

While preparing this manuscript, the authors become aware of one parallel work \cite{Gopalakrishnan} that was submitted to the same conference as the conference version of this paper \cite{XuJie2014}. \cite{Gopalakrishnan} considered a point-to-point multiple-input single-output (MISO) transmit beamforming system in wireless communication (instead of WET), where ACCPM is applied to estimate the single-user MISO channel at the transmitter based on the one-bit feedback from the receiver indicating whether the received SNR is larger or smaller than a given threshold. Although the core idea of using ACCPM for designing one-bit feedback based channel learning is essentially the same for the two papers, there are still noticeable differences due to independent investigations, which are briefly highlighted as follows. In \cite{Gopalakrishnan}, the transmitter needs to send an additional threshold to the receiver at each training interval to facilitate the one-bit feedback from the receiver; furthermore, the transmitted signal covariance in \cite{Gopalakrishnan}  is of rank-one or corresponds to transmit beamforming for both training and data transmission, while in this paper, we propose to use multi-beam training signals to achieve the faster convergence of channel estimation.

\appendix

\subsection{Proof of Proposition \ref{proposition:1}}\label{appendix:1}

In order to facilitate the proof, for any complex matrix $\mv{A} \in\mathbb{C}^{x \times y}$, we define the following operation that maps $\mv{A}$ into a real matrix as
\begin{align}\label{mapping}
\widehat{\mv{A}} \triangleq \left[\begin{array}{cc}
\mathtt{Re}(\mv{A}) & -\mathtt{Im}(\mv{A})   \\
\mathtt{Im}(\mv{A})  &\mathtt{Re}(\mv{A})
\end{array}\right] \in\mathbb{R}^{2x \times 2y},
\end{align}
where $\mathtt{Re}(\mv{A}) \in\mathbb{R}^{x \times y}$ and $\mathtt{Im}(\mv{A}) \in\mathbb{R}^{x \times y}$ denote the real and imaginary part of $\mv{A}$, respectively. 
\begin{lemma}\label{lemma:mapping}
For any two complex matrices $\mv{A}$ and $\mv{B}$ with appropriate dimensions, the mappings $\mv{A} \to \widehat{\mv{A}}$ and $\mv{B} \to \widehat{\mv{B}}$ have the following properties:
\begin{align}
\det\left(\widehat{\mv{A}}\right) &~~=~ |\det(\mv{A})|^2 \label{eqn:property:1}\\
\mathtt{tr}\left(\widehat{\mv{A}}\widehat{\mv{B}}\right) &~~=~ 2 \mathtt{tr}\left({\mv{A}}{\mv{B}}\right) \label{eqn:property:2}\\
\|\widehat{\mv{A}}\|_{\rm F} &~~=~ \sqrt{2}\|{\mv{A}}\|_{\rm F} \label{eqn:property:3}\\
\widehat{\mv{A}} \succeq \widehat{\mv{B}} & \iff {\mv{A}} \succeq {\mv{B}}. \label{eqn:property:4}
\end{align}
\end{lemma}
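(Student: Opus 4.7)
The plan is to exploit the fact that the real--representation map $\mv{A}\mapsto\widehat{\mv{A}}$ is a $*$-algebra homomorphism from $\mathbb{C}^{x\times y}$ to $\mathbb{R}^{2x\times 2y}$, i.e., it preserves sums, products, and conjugate--transposes. Once this is established, each of (\ref{eqn:property:1})--(\ref{eqn:property:4}) will follow from a short structural argument rather than further brute--force block computations.

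First I would verify the multiplicativity $\widehat{\mv{A}\mv{B}}=\widehat{\mv{A}}\widehat{\mv{B}}$ by a direct $2\times 2$ block multiplication using $\mv{A}\mv{B}=(\mathtt{Re}(\mv{A})+j\mathtt{Im}(\mv{A}))(\mathtt{Re}(\mv{B})+j\mathtt{Im}(\mv{B}))$; the real and imaginary parts of $\mv{A}\mv{B}$ line up exactly with the diagonal and off-diagonal blocks of $\widehat{\mv{A}}\widehat{\mv{B}}$. Together with the obvious identity $\widehat{\mv{A}}^T=\widehat{\mv{A}^H}$, this yields (\ref{eqn:property:3}) immediately: $\|\widehat{\mv{A}}\|_{\rm F}^2=\mathtt{tr}(\widehat{\mv{A}^H}\widehat{\mv{A}})=\mathtt{tr}(\widehat{\mv{A}^H\mv{A}})=2\,\mathtt{tr}(\mv{A}^H\mv{A})$, where the last equality is read directly from the two identical diagonal blocks of $\widehat{\mv{A}^H\mv{A}}$. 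Identity (\ref{eqn:property:2}) then drops out from the same block--diagonal structure; here I would note the harmless caveat that both sides are real--valued only when $\mathtt{tr}(\mv{A}\mv{B})\in\mathbb{R}$, which is the case of interest in the paper since the lemma is applied to Hermitian matrices.

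For (\ref{eqn:property:1}) I would use a similarity trick over $\mathbb{C}$: conjugating $\widehat{\mv{A}}$ by the unit--determinant matrix $\bigl(\begin{smallmatrix}\mv{I} & j\mv{I}\\ \mv{0} & \mv{I}\end{smallmatrix}\bigr)$ (and its inverse) block--triangularizes it into $\bigl(\begin{smallmatrix}\mv{A} & \mv{0}\\ * & \overline{\mv{A}}\end{smallmatrix}\bigr)$, so that $\det(\widehat{\mv{A}})=\det(\mv{A})\det(\overline{\mv{A}})=|\det(\mv{A})|^2$. This is the only step where one must be slightly careful about signs; the rest is mechanical.

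For (\ref{eqn:property:4}) it suffices (by linearity) to show $\widehat{\mv{C}}\succeq\mv{0}\iff\mv{C}\succeq\mv{0}$ for Hermitian $\mv{C}$. I would write an arbitrary complex test vector as $\mv{z}=\mv{u}+j\mv{v}$ and compute $\mv{z}^H\mv{C}\mv{z}$; since $\mathtt{Re}(\mv{C})$ is symmetric and $\mathtt{Im}(\mv{C})$ is skew--symmetric when $\mv{C}$ is Hermitian, the cross terms combine to give exactly $[\mv{u}^T~\mv{v}^T]\widehat{\mv{C}}[\mv{u}^T~\mv{v}^T]^T$. Both implications then follow by letting $(\mv{u},\mv{v})$ range over $\mathbb{R}^x\times\mathbb{R}^x$. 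I do not anticipate a serious obstacle anywhere in the proof; the only mild care is to track the sign coming from $\mathtt{Im}(\mv{C})^T=-\mathtt{Im}(\mv{C})$ in the Hermitian case so that the quadratic--form identity lines up correctly.
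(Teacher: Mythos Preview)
Your proposal is correct and, in fact, considerably more self-contained than the paper's own treatment: the paper simply cites Telatar (1999) for properties (\ref{eqn:property:1}) and (\ref{eqn:property:4}) and dismisses (\ref{eqn:property:2}) and (\ref{eqn:property:3}) as ``simple matrix manipulations'' without further detail. Your route---first establishing that $\mv{A}\mapsto\widehat{\mv{A}}$ is a $*$-algebra homomorphism (multiplicative with $\widehat{\mv{A}}^T=\widehat{\mv{A}^H}$), then reading off $\mathtt{tr}(\widehat{\mv{C}})=2\,\mathtt{Re}\,\mathtt{tr}(\mv{C})$ from the block diagonal---is the cleanest way to organize all four facts at once. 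The block-triangularization trick for (\ref{eqn:property:1}) and the quadratic-form identity $\mv{z}^H\mv{C}\mv{z}=[\mv{u}^T~\mv{v}^T]\,\widehat{\mv{C}}\,[\mv{u}^T~\mv{v}^T]^T$ for (\ref{eqn:property:4}) are standard and check out exactly as you describe. Your caveat that (\ref{eqn:property:2}) literally asserts $\mathtt{tr}(\widehat{\mv{A}}\widehat{\mv{B}})=2\,\mathtt{Re}\,\mathtt{tr}(\mv{A}\mv{B})$ in general is a useful sharpening the paper glosses over; it is harmless here because every subsequent application in Appendix~\ref{appendix:1} has both factors Hermitian.
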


\begin{proof}
The properties (\ref{eqn:property:1}) and (\ref{eqn:property:4}) follow from \cite[Lemma 1]{Telatar1999} and \cite[Corollary 2]{Telatar1999}, respectively. The properties (\ref{eqn:property:2}) and (\ref{eqn:property:3}) can be verified by simple matrix manipulations.
\end{proof}

Now, with the defined mapping operation, we are ready to prove Proposition \ref{proposition:1} by showing the convergence performance for a real counterpart of the ACCPM based channel learning algorithm (i.e., Algorithm 1) based on real (versus  complex) matrices. First, we define the real counterpart of Algorithm 1 as an ACCPM based algorithm that aims to find one feasible point in a target set given as follows to estimate the real matrix $\widehat{\mv G} \in \mathbb{R}^{2M_T \times 2M_T}$ mapped from the complex MIMO channel matrix ${\mv G}$.
\begin{align}\label{learning:targetset_real}
\widehat{\mathcal{X}}&=\{\widehat{\bar{\mv{G}}}^\varepsilon |\mv{0}\preceq \widehat{\bar{\mv{G}}}^\varepsilon \preceq \mv{I},~\|\widehat{\bar{\mv{G}}}^\varepsilon-\beta\widehat{\mv{G}}\|_{\mathrm{F}}\le 2\varepsilon,\forall \beta>0\}.
\end{align}
To find a point in the target set $\widehat{\mathcal{X}}$, at each iteration $n \ge 2$, the real counterpart algorithm obtains the following inequality
\begin{align}\label{eqn:cutting:real}
f_n\mathtt{tr}\left(\widehat{{\mv{G}}} \widehat{\mv{B}_{n}}\right)  \le 0,
\end{align}
which corresponds to a cutting plane ensuring that $\widehat{\mv G}$ should lie in the half space of $\widehat{\mathcal{H}}_n=\{\widehat{\bar{\mv{G}}}|f_n\mathtt{tr}\left(\widehat{\bar{\mv{G}}} \widehat{\mv{B}_{n}}\right)  \le 0\}$. Accordingly, the working set $\widehat{\mathcal{P}}_n$ for the real counterpart algorithm can be updated as $\widehat{\mathcal{P}}_n = \widehat{\mathcal{P}}_{n-1}\cap\widehat{\mathcal{H}}_{n}$, with $\widehat{\mathcal{P}}_0=\widehat{\mathcal{P}}_1=\{\widehat{\bar{\mv{G}}} |\mv{0}\preceq \widehat{\bar{\mv{G}}} \preceq \mv{I}\}$. Thus, we have
\begin{align}
\label{learning:workingset_real}
\widehat{\mathcal{P}}_n& =  \left\{\widehat{\bar{\mv{G}}}\big|\mv{0}\preceq\widehat{\bar{\mv{G}}}\preceq \mv{I},~f_i\mathtt{tr}\left(\widehat{\bar{\mv{G}}}\widehat{\mv{B}_i}\right) \le 0, 2\le i\le n\right\}, n\ge 0.
\end{align}
From (\ref{learning:workingset_real}), the analytic center of $\widehat{\mathcal{P}}_{n}$ can be obtained by solving the following problem involving only real matrices \cite{SunTohZhao2002}
\begin{align}
\mathop\mathtt{min}_{\mv{0}\preceq\widehat{\bar{\mv{G}}}\preceq \mv{I}}~&-\log \det \left(\widehat{\bar{\mv{G}}}\right) -  \log \det \left(\mv{I}-\widehat{\bar{\mv{G}}}\right) \nonumber \\ &-\sum_{i=2}^{n} \log\left(-f_i\mathtt{tr}\left(\widehat{\bar{\mv{G}}} \widehat{\mv{B}_i}\right)\right).\label{eqn:13:real}
\end{align}
It can be easily shown by using Lemma \ref{lemma:mapping} that the optimal solution to (\ref{eqn:13:real}) is indeed $\widehat{\tilde{\mv{G}}^{(n)}}$ (recall that $\tilde{\mv{G}}^{(n)}$ is the analytic center of ${\mathcal{P}}_{n}$ for Algorithm 1, defined in (\ref{eqn:13})), which serves as the query point at iteration $n+1$ for the real counterpart algorithm. With such query points, it follows that at the $n$th iteration, $n\ge2$, the cutting plane in (\ref{eqn:cutting:real}) is neutral given $\widehat{\tilde{\mv{G}}^{(n-1)}}$, i.e., $f_n\mathtt{tr}\left(\widehat{\tilde{\mv{G}}^{(n-1)}} \widehat{\mv{B}_{n}}\right)  = 0$, provided that $f_n\mathtt{tr}\left({\tilde{\mv{G}}^{(n-1)}} {\mv{B}_{n}}\right)  = 0$ for Algorithm 1 (cf. (\ref{eqn:neutral}) and (\ref{eqn:A})).

It can be shown that the real counterpart algorithm defined above has the following relationship to Algorithm 1: the analytic centers of $\{{\tilde{\mv{G}}^{(n)}}\}$ converge to a point in ${\mathcal{X}}$ for Algorithm 1 if and only if those of $\{\widehat{\tilde{\mv{G}}^{(n)}}\}$ converge to a point in $\widehat{\mathcal{X}}$ for its real counterpart algorithm. Therefore, it is evident that proving Proposition \ref{proposition:1} is equivalent to showing that in the real counterpart algorithm, $\{\widehat{\tilde{\mv{G}}^{(n)}}\}$ will converge to a point in $\widehat{\mathcal{X}}$ once the iteration index $n$ satisfies the inequality in (\ref{eqn:convergence}).

Next, we prove this argument for the real counterpart algorithm by applying the techniques given in \cite{SunTohZhao2002}. To do so, we define the $n$th potential function as
\begin{align}
\widehat{\varphi}_n\left(\widehat{\bar{\mv{G}}}\right) =& - \log \det \left(\widehat{\bar{\mv{G}}}\right) - \log \det \left(\mv{I}-\widehat{\bar{\mv{G}}}\right) \nonumber \\ &-\sum_{i=2}^{n} \log\left(-f_i\mathtt{tr}\left(\widehat{\bar{\mv{G}}} \widehat{\mv{B}_i}\right)\right), n \ge 0,\label{eqn:potenial_function:real}
\end{align}
and the potential function associated with $\widehat{\mathcal{P}}_n$ as
\begin{align}
\widehat{\varphi}_n^*\left(\widehat{\mathcal{P}}_n\right) & =\mathop\mathtt{min}\limits_{\widehat{\bar{\mv{G}}}\in \widehat{\mathcal{P}}_n}\widehat\varphi_n\left(\widehat{\bar{\mv{G}}}\right).\label{eqn:potential_set:real}
\end{align}
Note that the problem in (\ref{eqn:potential_set:real}) is same as that in (\ref{eqn:13:real}), whose optimal solution is the analytic center $\widehat{\tilde{\mv{G}}^{(n)}}$. Then, based on the potential function, our proof is obtained using the following ideas. First, by assuming $\widehat{\mathcal{P}}_{n} \supset \widehat{\mathcal{X}}$, we establish both the upper and lower bounds of the potential function $\widehat{\varphi}_n^*\left(\widehat{\mathcal{P}}_n\right), \forall n\ge 1$. Then, we show the convergence performance of the real counterpart algorithm by using the fact that $\{\widehat{\tilde{\mv{G}}^{(n)}}\}$ must converge to a point in $\widehat{\mathcal{X}}$ if  the upper and lower bounds contradict each other \cite{SunTohZhao2002}.

First, we establish both the upper and lower bounds of $\widehat{\varphi}_n^*\left(\widehat{\mathcal{P}}_n\right)$ by the following three lemmas, which are derived by using the fact that the cutting plane in (\ref{eqn:cutting:real}) is neutral for each iteration $n\ge 2$. Here, the three lemmas follow directly from \cite{SunTohZhao2002}, and thus we omit their proofs for brevity.
\begin{lemma}\label{lemma:3}
For any working set $\widehat{\mathcal{P}}_{n} \supset \widehat{\mathcal{X}}$, it follows that
\begin{align*}
\widehat\varphi_{n}^*\left(\widehat{\mathcal{P}}_{n}\right) \le -(n+4M_T-1)\log(2\varepsilon), n\ge 1.
\end{align*}
\end{lemma}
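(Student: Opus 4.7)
\emph{Plan.} The statement is a specialization of the standard ACCPM potential upper bound of \cite{SunTohZhao2002} to the real counterpart algorithm, with SDP dimension $2M_T$ and accuracy parameter $2\varepsilon$. Since $\widehat{\mathcal{P}}_n \supset \widehat{\mathcal{X}}$ by hypothesis, the infimum defining $\widehat{\varphi}_n^*(\widehat{\mathcal{P}}_n)$ is dominated by the value of $\widehat{\varphi}_n$ at any point of $\widehat{\mathcal{X}}$. My plan is therefore to construct an explicit $\widehat{\bar{\mv G}}^\varepsilon \in \widehat{\mathcal{X}}$ at which each of the three kinds of terms in $\widehat{\varphi}_n$ admits a clean upper bound of the form (constant)$\cdot(-\log(2\varepsilon))$, with the constants summing to exactly $n + 4M_T - 1$.

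Concretely, after normalizing $\|\widehat{\mv B_i}\|_{\rm F} = 1$ without loss of generality (a harmless additive shift of $\widehat{\varphi}_n$ that is standard in the Sun--Toh--Zhao framework), I would choose $\beta > 0$ and a small Hermitian perturbation $\mv E$ with $\|\mv E\|_{\rm F} \le 2\varepsilon$ so that $\widehat{\bar{\mv G}}^\varepsilon := \beta \widehat{\mv G} + \mv E$ has all $2M_T$ eigenvalues confined to the interval $[2\varepsilon,\, 1 - 2\varepsilon]$. The doubled-spectrum structure of $\widehat{\mv G}$ guaranteed by Lemma \ref{lemma:mapping} is precisely what makes this interval translate into a factor of $4M_T$ (rather than $2M_T$) in the final bound, since each real eigenvalue of $\bar{\mv G}^\varepsilon$ appears twice in the spectrum of $\widehat{\bar{\mv G}}^\varepsilon$ and hence contributes twice to both $-\log\det(\widehat{\bar{\mv G}}^\varepsilon)$ and $-\log\det(\mv I - \widehat{\bar{\mv G}}^\varepsilon)$. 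Together the two log-determinant barriers therefore contribute at most $-4M_T \log(2\varepsilon)$ at this point.

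For each of the $n - 1$ cut terms I would use the central observation that the entire Frobenius ball $\{\widehat{\bar{\mv G}} : \|\widehat{\bar{\mv G}} - \beta \widehat{\mv G}\|_{\rm F} \le 2\varepsilon\}$ is contained in $\widehat{\mathcal{P}}_n$ and therefore satisfies the half-space constraint $f_i\,\mathtt{tr}(\widehat{\bar{\mv G}}\,\widehat{\mv B_i}) \le 0$; applying this to the worst-case directional perturbation aligned with $f_i \widehat{\mv B_i}$ on the boundary of the ball forces $-f_i\,\mathtt{tr}(\beta \widehat{\mv G}\,\widehat{\mv B_i}) \ge 2\varepsilon$, so that the $i$-th cut term contributes at most $-\log(2\varepsilon)$ to $\widehat{\varphi}_n$ at $\widehat{\bar{\mv G}}^\varepsilon$. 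Summing these $n-1$ contributions with the $-4M_T\log(2\varepsilon)$ from the barriers gives the claimed bound. The main technical obstacle is building a single $\widehat{\bar{\mv G}}^\varepsilon$ that simultaneously meets the spectral condition and all $n-1$ cut-margin conditions: this requires balancing the choice of $\beta$ against the size and direction of $\mv E$, and it is exactly the step for which the perturbation arguments in \cite{SunTohZhao2002} do the real work.
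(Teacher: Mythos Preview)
Your plan is correct and mirrors exactly what the paper does: the paper omits the proof entirely and states that the lemma ``follows directly from \cite{SunTohZhao2002}'', which is precisely the reference you invoke for the perturbation argument. One small expository remark: the factor $4M_T$ in front of $-\log(2\varepsilon)$ arises simply because there are \emph{two} log-determinant barriers on a $2M_T\times 2M_T$ matrix, i.e., $2\times 2M_T$ eigenvalue terms; the doubled-spectrum property of $\widehat{\mv G}$ is a byproduct of the complex-to-real lift but is not what drives the count.
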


\begin{lemma}\label{lemma:1}
$\widehat\varphi^*_n\left(\widehat{\mathcal{P}}_{n}\right)$'s satisfy
\begin{align*}
\widehat\varphi_{n+1}^*\left(\widehat{\mathcal{P}}_{n+1}\right) \ge \widehat\varphi^*_n\left(\widehat{\mathcal{P}}_{n}\right) - \log r_{n+1} + c, \forall n\ge 1,
\end{align*}
where $c > 0$ is a constant, and $r_{n+1}$ is defined as \begin{align}\label{eqn:r_n}
r_{n+1} = \sqrt{({\mathrm{svec}}\widehat{\mv{B}_{n+1}})^T\left(\nabla^2\widehat\varphi_n\left(\widehat{\tilde{\mv{G}}^{(n)}}\right)\right)^{-1}({\mathrm{svec}}\widehat{\mv{B}_{n+1}})}
\end{align}
with $\nabla^2\widehat\varphi_n\left(\widehat{\bar{\mv{G}}}\right)$ denoting the second-order derivative of $\widehat\varphi_n\left(\widehat{\bar{\mv{G}}}\right)$ with respect to $\widehat{\bar{\mv{G}}}$, and ${\mathrm{svec}}(\cdot)$ denoting a linear isometry operation for real symmetric matrices \cite{SunTohZhao2002}.
\end{lemma}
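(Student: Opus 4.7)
The plan is to derive Lemma~\ref{lemma:1} through the standard one-step potential-reduction analysis for ACCPM with a neutral cut. I would first rewrite the new potential as an incremental perturbation of the old one,
$$\widehat\varphi_{n+1}(\widehat{\bar{\mv G}}) \;=\; \widehat\varphi_n(\widehat{\bar{\mv G}}) \;-\; \log\!\bigl(-f_{n+1}\mathtt{tr}(\widehat{\bar{\mv G}}\widehat{\mv B_{n+1}})\bigr),$$
and note two structural facts inherited from the construction of the real counterpart algorithm: (i) $\widehat{\tilde{\mv G}^{(n)}}$ is the unconstrained minimizer of $\widehat\varphi_n$ over $\widehat{\mathcal P}_n$, so $\nabla\widehat\varphi_n(\widehat{\tilde{\mv G}^{(n)}}) = 0$; and (ii) by neutrality of the cut (transferred from (\ref{eqn:neutral}) through Lemma~\ref{lemma:mapping}), $\mathtt{tr}(\widehat{\tilde{\mv G}^{(n)}}\widehat{\mv B_{n+1}}) = 0$, so the new log-barrier blows up at the old center and a genuine re-centering must occur.

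The core step is to lower-bound $\widehat\varphi_{n+1}^*(\widehat{\mathcal P}_{n+1})$ by replacing $\widehat\varphi_n$ with its quadratic Taylor model at $\widehat{\tilde{\mv G}^{(n)}}$. Because the log-det terms $-\log\det(\widehat{\bar{\mv G}})$, $-\log\det(\mv I-\widehat{\bar{\mv G}})$ and the linear-cut log-barriers are all self-concordant, the quadratic model gives a bona-fide global lower bound (up to an absolute additive constant). Writing $\mv s = \mathrm{svec}(\widehat{\mv B_{n+1}})$ and $\mv x = \mathrm{svec}(\widehat{\bar{\mv G}} - \widehat{\tilde{\mv G}^{(n)}})$, this leaves a convex one-parameter minimization of the form
$$\widehat\varphi_n^*(\widehat{\mathcal P}_n) \;+\; \min_{\mv x}\Bigl\{\tfrac12\mv x^T\!H\mv x - \log(-f_{n+1}\mv s^T\mv x) \Bigr\} \;+\; \text{(self-concordance slack)},$$
with $H = \nabla^2\widehat\varphi_n(\widehat{\tilde{\mv G}^{(n)}})$. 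Optimizing along the Dikin direction $\mv x = t H^{-1}\mv s$ reduces this to a scalar problem $\min_{t>0}\{\tfrac12 t^2\,\mv s^T H^{-1}\mv s - \log t - \log(\mv s^T H^{-1}\mv s)\}$, whose closed-form solution yields exactly $-\tfrac12\log(\mv s^T H^{-1}\mv s) + c_0 = -\log r_{n+1} + c_0$ for a universal constant $c_0$. Combining this with the self-concordance slack produces the claimed bound with $c = c_0 - (\text{slack}) > 0$.

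The main obstacle, as in \cite{SunTohZhao2002}, is rigorously controlling the self-concordance slack so that the quadratic model can be trusted beyond a small Dikin ellipsoid: one has to argue that either the Newton step lands inside a region where the cubic remainder is absorbed (using the $\theta$-self-concordance of log-det on the positive-semidefinite cone), or else a damped step suffices to certify descent by a fixed amount. A secondary subtlety is that the complex-to-real mapping of Lemma~\ref{lemma:mapping} rescales the Hessian and the svec-ed cut by powers of $\sqrt{2}$; these factors cancel in the ratio $\mv s^T H^{-1}\mv s$ that defines $r_{n+1}$, but they must be tracked to confirm that the constant $c$ is genuinely independent of $n$, $M_T$, and of the particular cut direction $\widehat{\mv B_{n+1}}$, as required for the inductive accumulation used in Lemma~\ref{lemma:3} and ultimately in Proposition~\ref{proposition:1}.
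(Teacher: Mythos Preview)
Your proposal is correct and follows the standard one-step potential-reduction analysis for ACCPM with neutral cuts. The paper itself does not reproduce this argument: it states that Lemmas~\ref{lemma:3}--\ref{lemma:2} ``follow directly from \cite{SunTohZhao2002}, and thus we omit their proofs for brevity,'' so your sketch is precisely the content the paper defers to that reference for.
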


\begin{lemma}\label{lemma:2}
For $r_{i} $ defined in (\ref{eqn:r_n}), it follows that
\begin{align*}
\sum_{i=2}^{n} r_{i}^2 \le 4M_T^2(2M_T+1)\log\left(1+\frac{n-1}{16M_T^2(2M_T+1)}\right).
\end{align*}
\end{lemma}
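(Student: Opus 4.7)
The plan is to derive this bound via the standard analytic-center cutting plane machinery of \cite{SunTohZhao2002}, since $r_i^2 = (\mathrm{svec}\,\widehat{\mv{B}_i})^T(\nabla^2 \widehat\varphi_{i-1}(\widehat{\tilde{\mv{G}}^{(i-1)}}))^{-1} (\mathrm{svec}\,\widehat{\mv{B}_i})$ is exactly the squared local norm of the new cut direction in the metric induced by the Hessian of the potential at the previous analytic center, which is the precise object that standard ACCPM analysis estimates. I would work in the ambient space of real symmetric $2M_T\times 2M_T$ matrices, of dimension $m = M_T(2M_T+1)$, and invoke Lemma \ref{lemma:mapping} at the outset to confirm that the dimensional constants of \cite{SunTohZhao2002} transfer cleanly from the real SDP setting to our complex-Hermitian one.

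The central tool is the matrix determinant lemma. Because each neutral cut adds, at the analytic center, a rank-one contribution to the Hessian, the ratio $\det H_n / \det H_1$ telescopes into approximately $\prod_{i=2}^{n}(1+r_i^2)$, once the shift of the analytic center is absorbed using the self-concordance estimates of the log-det barriers $-\log\det(\widehat{\bar{\mv{G}}})$ and $-\log\det(\mv{I}-\widehat{\bar{\mv{G}}})$. Taking logarithms and applying $\log(1+x)\ge x/(1+c)$ on the range where $r_i^2 \le c$ for an absolute constant $c$ (which is guaranteed by the self-concordance bound on $r_i$ for neutral cuts) converts the product into the desired additive bound on $\sum_{i=2}^{n} r_i^2$.

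To close the estimate I would upper bound $\det H_n$ by the arithmetic-geometric-mean inequality $\det H_n \le (\mathrm{tr}\,H_n/m)^m$, coupled with a linear-in-$n$ bound on $\mathrm{tr}\,H_n$: each neutral cut contributes a single rank-one term with trace $O(1)$ at the analytic center by self-concordance, while the two boundary barriers contribute only a bounded amount at any strictly interior analytic center of $\{\mv{0}\preceq\widehat{\bar{\mv{G}}}\preceq\mv{I}\}$. Feeding these pieces together produces a bound of the form $\sum_{i=2}^{n} r_i^2 \le K_1\log(1+(n-1)/K_2)$; matching $K_1,K_2$ against the ambient dimension $m = M_T(2M_T+1)$ and the standard self-concordance parameter $2M_T$ of the log-det barrier reproduces the explicit factors $4M_T^2(2M_T+1)$ and $16M_T^2(2M_T+1)$ stated in the lemma.

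The main obstacle, as I see it, is keeping track of the motion of the analytic center between consecutive iterations: the Hessian is not updated by a pure rank-one correction, because the query point $\widehat{\tilde{\mv{G}}^{(n)}}$ itself moves when the $(n+1)$-th cut is appended to $\widehat\varphi_n$. Controlling the resulting cross terms so that the clean $\prod(1+r_i^2)$ telescoping survives up to constants is exactly where the self-concordance calculus of \cite{SunTohZhao2002} is invoked, and that is where essentially all of the nontrivial technical work resides. Once that control is in place, the remainder is algebraic bookkeeping of the constants.
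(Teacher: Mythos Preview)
Your proposal is correct and follows essentially the same route as the paper: the paper does not give an independent argument for this lemma but states that it (together with Lemmas~\ref{lemma:3} and~\ref{lemma:1}) ``follow directly from \cite{SunTohZhao2002}, and thus we omit their proofs for brevity.'' Your sketch is a faithful reconstruction of the ACCPM potential-reduction analysis in \cite{SunTohZhao2002}, with the correct ambient dimension $m=M_T(2M_T+1)$ for real symmetric $2M_T\times 2M_T$ matrices; the use of the matrix determinant lemma, self-concordance of the $\log\det$ barriers, and the AM--GM bound on $\det H_n$ are precisely the ingredients of that reference, and your identification of the moving-center cross terms as the nontrivial step is accurate.
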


Next, by combining the above three lemmas, we show that in order for the upper and lower bounds of $\widehat\varphi_n^*\left(\widehat{\bar{\mv{G}}}\right)$ to hold, it follows that
\begin{align}
&\varepsilon^2 \le \nonumber\\
\label{eqn:convergence:opposite} &\frac{M_T+4M_T^2(2M_T+1)\log\left(1+\frac{n-1}{16M_T^2(2M_T+1)}\right)}{4n+16M_T-4} \frac{1}{\exp(\frac{2(n-1)c}{n+4M_T-1})},
\end{align}
which is indeed the reversed inequality of (\ref{eqn:convergence}). From Lemma \ref{lemma:1}, we have
\begin{align*}
\widehat\varphi_{n}^*\left(\widehat{\mathcal{P}}_{n}\right) \ge \widehat\varphi^*_1\left(\widehat{\mathcal{P}}_{1}\right) - \sum_{i=2}^{n}\log r_{i} + (n-1) c,~ n\ge 1.
\end{align*}
Combining this with Lemma \ref{lemma:3}, and using the fact that $\widehat\varphi^*_1\left(\widehat{\mathcal{P}}_{1}\right) = -4M_T\log(\frac{1}{2})$, it thus follows that
\begin{align*}
-(n+4M_T-1)\log(2\varepsilon) \ge -4M_T\log(\frac{1}{2})  - \sum_{i=2}^{n}\log r_{i} + (n -1)c.
\end{align*}
Note that
\begin{align*}
&\frac{4M_T\log(\frac{1}{2}) + \sum_{i=2}^{n}\log r_{i}}{n+4M_T-1} \\ =& \frac{4M_T\log(\frac{1}{4}) + \sum_{i=2}^{n}\log r_{i}^2}{2(n+4M_T-1)} \\ \le& \frac{1}{2}\log\left(\frac{M_T+\sum_{i=2}^{n} r_{i}^2}{n+4M_T-1}\right)\\
 \le &\frac{1}{2}\log\left(\frac{M_T+4M_T^2(2M_T+1)\log\left(1+\frac{n-1}{16M_T^2(2M_T+1)}\right)}{n+4M_T-1}\right),
\end{align*}
where the first inequality follows from the concavity of $\log(\cdot)$ function, and the second inequality is due to Lemma \ref{lemma:2}. As a result, we have
\begin{align*}
&\log(2\varepsilon) \\ \le& \frac{1}{2}\log\left(\frac{M_T+4M_T^2(2M_T+1)\log\left(1+\frac{n-1}{16M_T^2(2M_T+1)}\right)}{n+4M_T-1}\right) \\& -\frac{(n-1)c}{n+4M_T-1}.
\end{align*}
Accordingly, the inequality in (\ref{eqn:convergence:opposite}) follows.



So far, we have proved that in the real counterpart algorithm, given $\widehat{\mathcal{P}}_{n} \supset \widehat{\mathcal{X}}$, in order for the upper and lower bounds of $\widehat{\varphi}_n^*\left(\widehat{\mathcal{P}}_n\right)$ (i.e., Lemmas \ref{lemma:3}-\ref{lemma:2}) to hold, the inequality in (\ref{eqn:convergence:opposite}) must be true. In other words, if the inequality in (\ref{eqn:convergence}) holds or equivalently (\ref{eqn:convergence:opposite}) is violated, then the upper and lower bounds must contradict each other. In this case, it follows from \cite{SunTohZhao2002} that $\{\widehat{\tilde{\mv{G}}^{(n)}}\}$ must converge to a point in $\widehat{\mathcal{X}}$. Based on this result together with the relationship between Algorithm 1 and its real counterpart, it follows that once $n$ satisfies the inequality in (\ref{eqn:convergence}), then $\tilde{\mv{G}}^{(n)}$'s in the ACCPM based single-user channel learning algorithm will converge to a point in the target set $\mathcal{X}$.

Finally, to complete the proof, it remains to show that $\|\tilde{\mv{G}}^{(n)}\|_{\rm F} \ge 1/4$ and the right-hand side in (\ref{eqn:convergence}) is monotonically decreasing with $n \ge 1$. The first argument is true since it follows from (\ref{eqn:13}) that the dominant eigenvalue of $\tilde{\mv{G}}^{(n)}$ is always no smaller than $1/2$, while the second fact can be easily verified by deriving the first-order derivative of the function in the right-hand side in (\ref{eqn:convergence}). As a result, the proof of Proposition \ref{proposition:1} is completed.

\subsection{Proof of Proposition \ref{proposition:2}}\label{appendix:proof2}

First, we show that if ${\tilde{\mv{G}}^{(N_L)}}$ lies in the target set $\mathcal{X}$ with a desired accuracy of $\varepsilon/8$, i.e., there exists a $\beta> 0$ such that $\|{\tilde{\mv{G}}^{(N_L)}}-\beta\mv{G}\|_{\mathrm{F}}\le \varepsilon/8$, then it must hold that $\|\tilde{\mv{G}}-{\mv{G}}\|\le \varepsilon$. Define $\tilde{G}_{\rm F} \triangleq \|\tilde{\mv{G}}^{(N_L)}\|_{\rm F} \ge 1/4$. Then it follows that
\begin{align}
&\left\|{\tilde{\mv{G}}^{(N_L)}}-\beta\mv{G}\right\|_{\mathrm{F}} = { \tilde{G}_{\rm F}}\left\|\frac{\tilde{\mv{G}}^{(N_L)}}{ \tilde{G}_{\rm F}}-\frac{\beta{\mv{G}}}{ \tilde{G}_{\rm F}}\right\|_{\rm F}\nonumber \\  \ge& { \tilde{G}_{\rm F}}\left\|{\mv{G}}-\frac{\beta{\mv{G}}}{ \tilde{G}_{\rm F}}\right\|_{\rm F} = { \tilde{G}_{\rm F}}\left|1- \frac{\beta}{ \tilde{G}_{\rm F}}\right| \ge \frac{1}{4}\left|1- \frac{\beta}{ \tilde{G}_{\rm F}}\right|,\label{eqn:norm:ineqn}
\end{align}
where the first inequality holds due to $\left\|\frac{\tilde{\mv{G}}^{(N_L)}}{ \tilde{G}_{\rm F}}\right\|_{\rm F} = 1$ and $\|{\mv{G}}\|_{\rm F} = 1$. Using (\ref{eqn:norm:ineqn}) together with $\|{\tilde{\mv{G}}^{(N_L)}}-\beta\mv{G}\|_{\mathrm{F}}\le \varepsilon/8$, it follows that $\left|1- \frac{\beta}{ \tilde{G}_{\rm F}}\right| \le \varepsilon/2$. Therefore, it must hold that
\begin{align*}
&\|\tilde{\mv{G}}-{\mv{G}}\|_{\rm F}=\left\|\frac{\tilde{\mv{G}}^{(N_L)}}{ \tilde{G}_{\rm F}}-{\mv{G}}\right\|_{\rm F}\\ \le & \left\|\frac{\tilde{\mv{G}}^{(N_L)}}{ \tilde{G}_{\rm F}} - \frac{\beta{\mv{G}}}{ \tilde{G}_{\rm F}}\right\|_{\rm F} + \left\|\frac{\beta{\mv{G}}}{ \tilde{G}_{\rm F}}-{\mv{G}}\right\|_{\rm F}\le \frac{\varepsilon}{2}+\left|1- \frac{\beta}{ \tilde{G}_{\rm F}}\right|\le  \varepsilon.
\end{align*}
As a result, we have proved that if $\|{\tilde{\mv{G}}^{(N_L)}}-\beta\mv{G}\|_{\mathrm{F}}\le \varepsilon/8$, then $\|\tilde{\mv{G}}-{\mv{G}}\|_{\rm F}\le \varepsilon$.

Next, we show that the proposed algorithm converges to a point ${\tilde{\mv{G}}^{(N_L)}}$ with $\left\|{\tilde{\mv{G}}^{(N_L)}}-\beta{\mv{G}}\right\|_{\rm F}\le \varepsilon/8$ after a number of
$N_L =\mathcal O\left(\frac{M_T^3}{\varepsilon^2} \right)$ feedback intervals. By using Proposition \ref{proposition:1}, we have that in the worst case, the ACCPM based algorithm will converge to a matrix ${\tilde{\mv{G}}^{(N_L)}}$ with $\left\|{\tilde{\mv{G}}^{(N_L)}}-\beta{\mv{G}}\right\|_{\rm F}\le \varepsilon/8$ once
\begin{align*}
\varepsilon^2/64 \ge &\frac{M_T+4M_T^2(2M_T+1)\log\left(1+\frac{N_L-1}{16M_T^2(2M_T+1)}\right)}{(4N_L+16M_T-4)\exp(\frac{2(N_L-1)c}{n+4M_T-1})}.
\end{align*}
By ignoring the lower order terms, we can have $\frac{N_L}{\log\left({N_L}/{M_T^3}\right)} \ge\mathcal O\left(\frac{M_T^3}{\varepsilon^2} \right)$.
Since $\log(N_L)$ is negligible compared to $N_L$, it then follows that ${N_L} \thickapprox\mathcal O\left(\frac{M_T^3}{\varepsilon^2} \right)$.

By combining the above two arguments, Proposition \ref{proposition:2} is thus proved.

\subsection{Other Algorithms for Comparison}\label{appendixC}
For comparison, in this appendix, we introduce three alternative algorithms, namely, CJT \cite{Noam2013}, gradient sign \cite{BanisterZeidler2003}, and distributed beamforming \cite{Mudumbai2010}, for the design of one-bit feedback based channel learning, which were proposed and studied in other application scenarios instead of WET. Note that these algorithms were originally designed to learn one single MIMO/MISO channel only. In order to extend them to be applied for the multiuser MIMO WET system of our interest with $K\ge 1$ ERs, we consider the transmission protocol as shown in Fig. \ref{fig:protocol:existing} for all these three algorithms, in which the channel learning phase is divided into $K$ equal-duration slots. In each slot of length $\tau/K$ (or equivalently $N_L/K$ in number of feedback intervals by assuming $N_L/K$ to be an integer), the ET learns one of the $K$ MIMO channels by applying one from the three channel learning algorithms. In the following, we discuss the implementation details of the three algorithms, respectively. For each algorithm, we first introduce the channel training and estimation method during each slot of the channel learning phase assuming the same (or slightly modified) one-bit feedback scheme at each ER as in our proposed channel learning algorithm, and then explain the energy beamforming design for the energy transmission phase.

\begin{figure}
\centering
 \epsfxsize=1\linewidth
    \includegraphics[width=8.8cm]{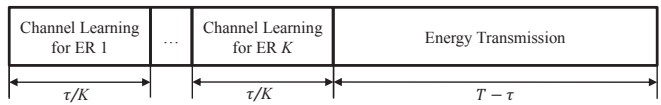}\vspace{0.3em}
\caption{The transmission protocol for the three existing channel learning algorithms.} \label{fig:protocol:existing}
\end{figure}

\subsubsection{CJT \cite{Noam2013}}\label{appendixC1}
First, consider the $k$th channel learning slot for estimating the channel from the ET to the $k$th ER, $k\in\mathcal K$. During this slot, the ET updates $\mv{S}^{\rm L}_n$'s based on the feedback bit $f_{k,n}$'s to implement CJT for blindly obtaining the EVD for $\mv G_k$, i.e., $\mv{G}_k = \mv V_k \mv \Lambda_k \mv{V}_k^H$; accordingly, the ET can obtain an estimate of $\mv{V}_k$ at the end of slot $k$, denoted by $\tilde{\mv V}_k$.

The idea of CJT is based on the fact that the EVD of $\mv G_k$ can be implemented through a series of two-dimensional rotations, where at each rotation one pair of the matrix's two off-diagonal elements are eliminated. Specifically, let $a \ge 1$ denote the rotation index, and define $\mv{A}_0 \triangleq \mv{G}_k$. Then at each rotation $a$, the CJT aims to find an $M_T \times M_T$ unitary rotation matrix $\mv{W}_a$ to construct a new matrix $\mv{A}_{a} = \mv{W}_a\mv{A}_{a-1}\mv{W}_a^H$, such that the $(l_a,m_a)$th and $(m_a,l_a)$th entries of $\mv{A}_{a}$ are both zero, where $1\le l_a\le M_T$ and $l_a < m_a\le M_T$. Note that for the $M_T\times M_T$ Hermitian matrix $\mv{G}_k$, there are $M_T(M_T-1)/2$ pairs of $(l,m)$ satisfying $1\le l\le M_T$ and $l< m\le M_T$; hence, in each set of $M_T(M_T-1)/2$ rotations, the CJT will choose $(l_a,m_a)$ such that each pair of $(l,m)$ is selected once only. We refer to such a set of $M_T(M_T-1)/2$ rotations as one sweep. After each complete sweep, the ET can obtain an improved estimate of $\mv{V}_k$. To successfully implement CJT based on the one-bit feedback, it is key to estimate $\mv{W}_a$ at each rotation $a$. This is done by calculating two real variables via two binary line searches, where each line search is accomplished with several feedback intervals required by utilizing the feedback bit $f_{k,n}$'s (see more details in \cite{Noam2013}).

Next, with the estimated $\tilde{\mv V}_k$'s for all the $K$ ERs after the completion of the channel learning phase, we consider the energy beamforming design for the energy transmission phase. Since the ET can only obtain the estimates of $\{\mv{V}_k\}$ but not those of $\{\mv{\Lambda}_k\}$, it is difficult to reconstruct the estimate of $\mv{G}=\sum_{k\in\mathcal{K}} \mv{G}_k$ for implementing the OEB. To overcome this difficulty, we propose a suboptimal energy beamforming design for WET. Specifically, we define $\bar{\mv G} \triangleq \sum_{k\in\mathcal K} \tilde{\mv{v}}_{k,1}^H \tilde{\mv{v}}_{k,1}$, where $\tilde{\mv{v}}_{k,1}$ denotes the estimate of the dominant eigenvector of $\mv{G}_k$ (i.e., ${\mv{v}}_{k,1}$), which can be obtained based on the estimated $\tilde{\mv{V}}_k, k\in\mathcal K$. Then the ET uses the dominant eigenvector of $\bar {\mv G}$ as the energy beamforming vector $\tilde{\mv v}_E$ during the energy transmission phase, i.e., $\mv{S}^{\rm{E}} = P\tilde{\mv{v}}_E\tilde{\mv{v}}_E^H$. It is worth noting that although this energy beamforming design is suboptimal in general, it is optimal when $K=1$ or $M_R=1$, since in this case the estimated energy beamforming vector will approach to the OEB with perfect CSI at the ET as $N_L\rightarrow \infty$.

It should be pointed out that in the CJT algorithm, the performance of channel learning is controlled by the number of sweeps implemented as well as the desired accuracy of each line search, given by $\eta > 0$. If $\eta$ is set small and the number of sweeps is chosen to be large, the estimation of $\mv{V}_k$ can be made more accurate, but at the cost of more feedback intervals required.

\subsubsection{Gradient Sign  \cite{BanisterZeidler2003}}\label{appendixC2}
First, consider the $k$th slot in the channel learning phase, $k\in\mathcal K$, for which the feedback intervals are indexed by 1 to $N_L/K$ for convenience. During this slot, the ET sends only one energy beam per feedback interval denoted by $\mv{w}_n^{\rm{L}}$ (i.e., $\mv{S}_n^{\rm{L}}  = \mv{w}_n^{\rm{L}}\mv{w}_n^{{\rm L}H}$ is of rank-one), $n=1,\ldots,N_L/K$. By adjusting $\mv{w}_n^{\rm{L}}$'s over different feedback intervals in slot $k$, the ET can obtain an estimate of the dominant eigenvector of $\mv G_k$, i.e., $\mv v_{k,1}$, denoted by $\tilde{\mv v}_{k,1}$, for ER $k$.

The gradient sign algorithm during each slot $k$ is explained as follows by assuming $N_L/K$ to be an even integer for convenience. Define a sequence of reference beamforming vectors for this slot as $\{\tilde{\mv{w}}_{a}\}$ with $0\le a\le \frac{N_L}{2K}-1$, where $\tilde{\mv{w}}_{0} = \sqrt{\frac{P}{M_T}}\mv{1}$ denotes the initial reference beamforming vector. Then, over each odd interval $n=2a+1$, the ET sets the energy beam $\mv{w}_{2a+1}^{\rm{L}}$ by adding a random perturbation vector to $\tilde{\mv{w}}_{a}$, while over the next even interval $n=2a+2$, the ET sets $\mv{w}_{2a+2}^{\rm{L}}$ by subtracting the same random perturbation to $\tilde{\mv{w}}_{a}$. Based on the feedback  $f_{k,2a+2}$ from ER $k$ at the end of interval $n=2a+2$, the ET can know whether $\mv{w}_{2a+1}^{\rm{L}}$ or $\mv{w}_{2a+2}^{\rm{L}}$ achieves a higher transferred energy to ER $k$, and accordingly, it updates the reference beamforming vector $\tilde{\mv{w}}_{a+1}$ as the better one. By performing the above procedure, it is shown in \cite{BanisterZeidler2003} that as $a\to \infty,$ the reference beamforming vector $\{\tilde{\mv{w}}_a\}$ will approach the direction of the dominant eigenvector $\mv v_{k,1}$. Therefore, we can obtain an estimate of $\mv v_{k,1}$ as $\tilde{\mv v}_{k,1}$ over finite number of feedback intervals in slot $k$.

Next, consider the energy transmission phase. Similar to the case of  CJT algorithm, since the ET can only obtain the estimates of ${\mv v}_{k,1}$'s instead of $\mv G_k$'s, the ET applies the dominant eigenvector of $\bar{\mv G} = \sum_{k\in\mathcal K} \tilde{\mv{v}}_{k,1}^H \tilde{\mv{v}}_{k,1}$ as the energy beamforming vector $\tilde{\mv{v}}_E$ for WET.

Note that in the gradient sign algorithm, the performance of channel learning is affected by the norm of the added/subtracted random perturbation vector, which is referred to as the step size $\xi$. In general, larger value of $\xi$ corresponds to faster convergence of the algorithm but also results in more estimation errors.

\subsubsection{Distributed Beamforming \cite{Mudumbai2010}}\label{appendixC3}

Note that in the case of distributed beamforming, the design of one-bit feedback $f_{k,n}$ from ER $k$ needs to be slightly modified as compared to that of the ACCPM, CJT and gradient sign algorithms, i.e., it is designed to indicate whether the harvested energy at ER $k$ over the $n$th interval, $n=1,\ldots,N_L/K$, is larger or smaller than its record of the highest harvested energy so far during the $k$th slot of the channel learning phase. Here, the intervals over slot $k$ are indexed by 1 to $N_L/K$ for convenience, similar to the previous case of gradient sign.

With the modified feedback design, we provide the details of the channel learning for distributed beamforming in each slot $k$. At each interval $n$ of the $k$th slot, $1\le n\le \frac{N_L}{K}$, the ET sends one energy beam denoted by $\mv{w}_n^{\rm{L}} = \sqrt{\frac{P}{M_T}}\left[e^{j\vartheta^{\rm L}_{1,n}} \cdots e^{j\vartheta^{\rm L}_{M_T,n}} \right]^T$, where only the signal phases $\{\vartheta^{\rm L}_{m,n}\}$ are adjustable over different intervals of $n$. We set $\mv{w}_{1} = \sqrt{\frac{P}{M_T}}\mv{1}$ as the initial energy beam. We also define $\tilde{\mv{w}}_n  = \sqrt{\frac{P}{M_T}}\left[e^{j\vartheta_{1,n}} \cdots e^{j\vartheta_{M_T,n}} \right]$ as the best transmit beamforming vector for ER $k$ during the $k$th slot prior to interval $n$, with $\tilde{\mv{w}}_{1} =\mv{w}_{1}^{\rm{L}} = \sqrt{\frac{P}{M_T}}\mv{1}$. At interval $n+1$, $n=1,\ldots,N_L/K-1$, the ET applies a set of random phase perturbations $\delta_{m,n+1}$ to $\vartheta_{m,n}$, $m=1,\ldots,M_T$, in order to probe for a potentially better phase array, i.e., $\vartheta^{\rm L}_{m,n+1} = \delta_{m,n+1} + \vartheta_{m,n}, m=1,\ldots,M_T$. Then, if the feedback bit $f_{k,n+1}$ from the $k$th ER indicates an improvement in the harvested energy, the ET updates $\tilde{\mv{w}}_{n+1}\gets{\mv{w}}_{n+1}^{\rm L}$; otherwise, the ET sets $\tilde{\mv{w}}_{n+1}\gets\tilde{\mv{w}}_n$. By performing the above procedure, the ET can obtain the updated $\tilde{\mv{w}}_{N_L/K}$  at the end of the $k$th channel learning slot, the direction of which is used as an estimate of ${\mv{v}}_{k,1}$, denoted by $\tilde{\mv{v}}_{k,1}$. Note that for the distributed beamforming based algorithm, how to set the phase perturbation $\delta_{m,n}$'s is an important issue. In this paper, we generate $\delta_{m,n}$'s for each interval $n$ randomly based on a uniform distribution over the interval $[-\chi/2,\chi/2]$, where $\chi > 0$ is the step size that controls the algorithm accuracy and speed.

Next, consider the energy transmission phase. Similar to the previous two cases, based on the estimated $\{\tilde{\mv v}_{k,1}\}$ during the channel learning phase, the ET applies the dominant eigenvector of $\bar{\mv G} = \sum_{k\in\mathcal K} \tilde{\mv{v}}_{k,1}^H \tilde{\mv{v}}_{k,1}$ as the energy beamforming vector $\tilde{\mv{v}}_E$ for WET.

\end{document}